\definecolor{cornellred}{rgb}{0.7, 0.11, 0.11}
\theoremstyle{plain}
\newtheorem{proposition}{Proposition}
\newtheorem{fact}{Fact}
\newtheorem{theorem}{Theorem}
\newtheorem{lemma}{Lemma}
\newtheorem{corollary}{Corollary}
\theoremstyle{definition}
\newtheorem{definition}{Definition}
\newtheorem{remark}{Remark}
\newcommand{\Ex}[1]{\mbox{\rm\bf E}\left[#1\right]}
\DeclareMathOperator*{\argmin}{arg\,min}
\DeclareMathOperator*{\argmax}{arg\,max}
\definecolor{ttttff}{rgb}{0.2,0.2,1.}
\definecolor{ttqqqq}{rgb}{0.2,0.,0.}
\definecolor{ududff}{rgb}{0.30196078431372547,0.30196078431372547,1.}
\definecolor{ccqqqq}{rgb}{0.8,0.,0.}
\newcommand{\opt}{\texttt{OPT}}
\renewcommand{\csc}{\texttt{CRSC}}
\newcommand{\HRSC}{\texttt{HRSC}}
\newcommand{\Bcsc}{\texttt{CRBC}}
\title{Hierarchical Clustering with Structural Constraints}
\begin{document}
\title{Hierarchical Clustering with Structural Constraints}
\author[1]{Vaggos Chatziafratis}
\author[1]{Rad Niazadeh}
\author[1]{Moses Charikar}
\affil[1]{Department of Computer Science,  Stanford University.}
\renewcommand\Authands{ and }
\let\theHalgorithm=\thealgorithm
\maketitle


\begin{abstract}

Hierarchical clustering is a popular unsupervised data analysis method. For many real-world applications, we would like to exploit prior information about the data that imposes constraints on the clustering hierarchy, and is not captured by the set of features available to the algorithm. This gives rise to the problem of \textit{hierarchical clustering with structural constraints}. Structural constraints pose major challenges for 
bottom-up approaches like average/single linkage and even though they can be naturally incorporated into top-down divisive algorithms, no formal guarantees exist on the quality of their output. In this paper, we provide provable approximation guarantees for two simple top-down algorithms, 
using a recently introduced optimization viewpoint of hierarchical clustering with pairwise similarity information~\citep{dasguptaSTOC}. We show how to find good solutions even in the presence of conflicting prior information, by formulating a \emph{constraint-based regularization} of the objective. We further explore a variation of this objective for dissimilarity information~\citep{vincentSODA} and improve upon current techniques. Finally, we demonstrate our approach on a real dataset for the taxonomy application.
\end{abstract}

\section{Introduction}

Hierarchical clustering (HC) is a widely used data analysis tool, ubiquitous in information retrieval, data mining, and machine learning~(see a survey by \cite{berkhin2006survey}). This clustering technique represents a given dataset as a binary tree; each leaf represents an individual data point and each internal node represents a cluster on the leaves of its descendants.
HC has become the most popular method for gene expression data analysis~\cite{eisen1998cluster}, and also has been used in the analysis of social networks~\cite{leskovec2014mining,mann2008use}, bioinformatics~\cite{diez2015novel}, image and text classification~\cite{steinbach2000comparison}, and even in analysis of financial markets~\cite{tumminello2010correlation}.
It is attractive because it provides richer information at all levels of granularity simultaneously, compared to more traditional \textit{flat} clustering approaches like $k$-means or $k$-median. 

Recently, \cite{dasguptaSTOC} formulated HC as a combinatorial optimization problem, giving a principled way to compare the performance of different HC algorithms.
This optimization viewpoint has since received a lot of attention~\cite{royNIPS,vaggosSODA,vincentNIPS,joshNIPS,vincentSODA} that has led not only to the development of new algorithms but also to theoretical justifications for the observed success of popular HC algorithms (e.g. average-linkage).


However, in real applications of clustering, the user often has \textit{background knowledge} about the data that may not be captured by the input to the clustering algorithm.
There is a rich body of work on constrained (flat) clustering formulations that take into account such user input in the form of ``cannot link'' and ``must link'' constraints~\cite{wagstaff2000clustering,wagstaff2001constrained, bilenko2004integrating, AISTATS}.
Very recently, ``semi-supervised'' versions of HC that incorporate additional constraints have been studied~\cite{dasguptaICML}, where the natural form of such constraints is triplet (or ``must link before'') constraints $ab|c$\footnote{Hierarchies on data imply that all datapoints are linked at the highest level and all are separated at the lowest level, hence ``cannot link'' and ``must link'' constraints are not directly meaningful.}: these require that valid solutions contain a sub-cluster with $a,b$ together and $c$ previously separated from them.\footnote{For a concrete example from taxonomy of species, a triplet constraint may look like ($\textsc{Tuna}, \textsc{Salmon}|\textsc{Lion}$).} 
Such triplet constraints, as we formally show later, can encode more general structural constraints in the form of rooted subtrees.
Surprisingly, such simple triplet constraints already pose significant challenges for bottom-up linkage methods.
(\hyperref[fig:fail]{Figure~\ref{fig:fail}}).
\begin{figure}[ht]

\begin{center}
\centerline{\includegraphics[scale=0.62]{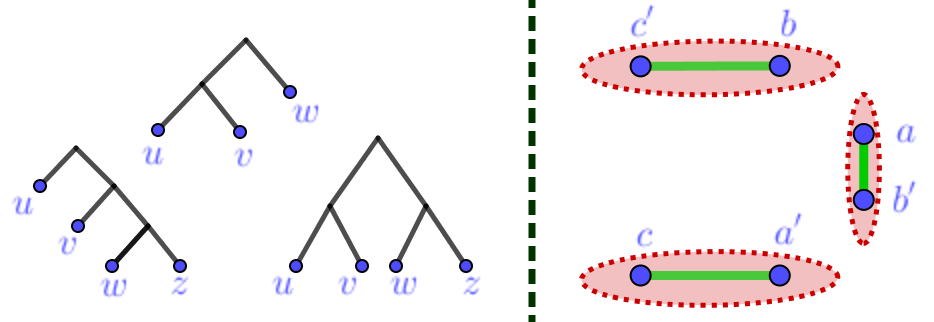}} \label{fig:fail}
\caption{ \emph{(Left)} Example of a triplet constraint $uv|w$ and more general rooted tree constraints on 4 points $u,v,w,z$. \emph{(Right)} Example with only two constraints $ab|c,a'b'|c'$ demonstrating that popular distance-based linkage algorithms may fail to produce valid HC. Here they get stuck after 3 merging steps (green edges).}
\end{center}
\end{figure}

Our work is motivated by applying the optimization lens to study the interaction of hierarchical clustering algorithms with structural constraints. Constraints can be fairly naturally incorporated into top-down (i.e. divisive) algorithms for hierarchical clustering; but can we establish guarantees on the quality of the solution they produce? Another issue is that incorporating constraints from multiple experts may lead to a conflicting set of constraints; can the optimization viewpoint of hierarchical clustering still help us obtain good solutions even in the presence of infeasible constraints? Finally, different objective functions for HC have been studied in the literature; do algorithms designed for these objectives behave similarly in the presence of constraints? To the best of our knowledge, this is the first work to propose a unified approach for constrained HC through the lens of optimization and to give provable approximation guarantees for a collection of fast and simple top-down algorithms that have been used for unconstrained HC in practice (e.g. community detection in social networks~\cite{mann2008use}).

\paragraph{Background on Optimization View of HC.}


\cite{dasguptaSTOC} introduced a natural optimization framework for HC. Given a weighted graph $G(V,E,w)$ and pairwise
similarities $w_{ij}\ge 0$ between the $n$ data points $i,j\in V$, the goal is to find a hierarchical tree $T^*$ such that
\begin{align}
T^*=\argmin_{\textrm{all trees}~T}~\sum_{(i,j)\in E}w_{ij}\cdot \lvert T_{ij}\rvert \label{obj}
\end{align}
where $T_{ij}$ is the subtree rooted at the lowest common ancestor of $i,j$ in $T$ and $\lvert T_{ij}\rvert$ is the number of leaves it contains.\footnote{Observe that in HC, all edges get cut eventually. Therefore it is better to postpone cutting ``heavy'' edges to when the clusters become small, i.e .as far down the tree as possible.} We denote \eqref{obj} as \emph{similarity-HC}. For applications where the geometry of the data is given by dissimilarities,  again denoted by $\{w_{ij}\}_{(i,j)\in E}$, ~\cite{vincentSODA} proposed an analogous approach, where the goal is to find a hierarchical tree $T^*$ such that
 \begin{align}
T^*=\argmax_{\textrm{all trees}~T}~\sum_{(i,j)\in E}w_{ij}\cdot \lvert T_{ij}\rvert \label{obj-3}
\end{align} 
We denote \eqref{obj-3} as \emph{dissimilarity-HC}. A comprehensive list of desirable properties of the aformentioned objectives can be found in~\cite{dasguptaSTOC,vincentSODA}. In particular, if there is an underlying ground-truth hierarchical structure in the data, then $T^*$ can recover the ground-truth. Also, both objectives are \textsc{NP}-hard to optimize, so the focus is on approximation algorithms.
\paragraph{Our Results.}
 \emph{\romannum{1}}) We design algorithms that take into account both the geometry of the data, in the form of similarities, and the structural constraints imposed by the users. Our algorithms emerge as the natural extensions of Dasgupta's original recursive sparsest cut algorithm and the recursive balanced cut suggested in \cite{vaggosSODA}. We generalize previous analyses to handle constraints and we prove an $O(k\alpha_n)$-approximation guarantee\footnote{For $n$ data points, $\alpha_n=O(\sqrt{\log n})$ is the best approximation factor for the sparsest cut and $k$ is the number of constraints.}, thus surprisingly matching the best approximation guarantee of the \emph{unconstrained} HC problem for constantly many constraints.
 
\emph{\romannum{2}}) In the case of infeasible constraints, we extend the similarity-HC optimization framework, and we measure the quality of a possible tree $T$ by a \textit{constraint-based regularized} objective. The regularization naturally favors solutions with as few constraint violations as possible and as far down the tree as possible (similar to the motivation behind similarity-HC objective). For this problem, we provide a top-down $O(k\alpha_n)$-approximation algorithm by drawing an interesting connection to an instance of the hypergraph sparsest cut problem. 

\emph{\romannum{3}})  We then change gears and study the dissimilarity-HC objective. Surprisingly, we show that known top-down techniques do not cope well with constraints, drawing a contrast with the situation for similarity-HC. Specifically, the \emph{(locally) densest cut heuristic} performs poorly even if there is only one triplet constraint, blowing up its approximation factor to $O(n)$. 
Moreover, we improve upon the state-of-the-art in~\cite{vincentSODA}, by showing a simple randomized partitioning is a $\tfrac23$-approximation algorithm. We also give a deterministic \textit{local-search} algorithm with the same worst-case guarantee. Furthermore, we show that our randomized algorithm is robust under constraints, mainly because of its ``exploration'' behavior. In fact, besides the number of constraints, we propose an inherent notion of \emph{dependency measure} among constraints to capture this behavior quantitatively. This helps us not only to explain why ``non-exploring'' algorithms may perform poorly, but also gives tight guarantees for our randomized algorithm.

\paragraph{Experimental Results.} We run experiments on the Zoo dataset~\citep{Zoo} to demonstrate our approach and the performance of our algorithms for a taxonomy application. We consider a setup where there is a ground-truth tree and extra information regarding this tree is provided for the algorithm in the form of triplet constraints. The upshot is we believe specific variations of our algorithms can exploit this information; In this practical application, our algorithms have around $\%9$ imrpvements in the objective compared to the naive recursive sparsest cut proposed in \cite{dasguptaSTOC} that does not use this information. See Appendix~\ref{sec:appendix-zoo} for more details on the setup and precise conclusions of our experiments. 

\paragraph{Constrained HC work-flow in Practice.} Throughout this paper, we develop different tools to handle user-defined structural constraints for hierarchical clustering. Here we describe a recipe on how to use our framework in practice.

\textit{(1) Preprocessing constraints to form triplets.} User-defined structural constraints as rooted binary subtrees are convenient for the user and hence for the usability of our algorithm. The following proposition (whose proof is in the supplement) allows us to focus on studying HC with just triplet constraints.

\begin{proposition}
\label{prop:convert}
Given constraints as a rooted binary subtree $T$ on $k$ data points ($k\ge3$), there is linear time algorithm that returns an equivalent set of at most $k$ triplet constraints.
\end{proposition}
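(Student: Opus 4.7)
}

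The plan is to exhibit a direct algorithm that emits one triplet per non-root internal node of $T$, giving at most $k-1 \le k$ triplets in total, and then to argue equivalence by induction on $k$.

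\textbf{The construction.} Root $T$ and perform a single DFS, computing at each internal node $v$ a representative leaf $r(v)$ (for example, the first leaf encountered in the subtree rooted at $v$); for a leaf $\ell$, set $r(\ell)=\ell$. During the same DFS, for every \emph{non-root} internal node $v$ with children $v_L,v_R$, whose sibling in its parent $p(v)$ is $u$, output the triplet
\[
r(v_L)\,r(v_R)\,\bigl|\,r(u).
\]
A rooted binary tree on $k$ leaves has exactly $k-1$ internal nodes, so we emit at most $k-2 \le k$ triplets in $O(k)$ time, since computing all representatives is a single bottom-up pass and each triplet is read off in $O(1)$ from local information at $v$ and its parent.

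\textbf{Equivalence (easy direction).} Any HC tree $T'$ on the full data set whose restriction to the $k$ points equals $T$ satisfies every emitted triplet $ab|c$: the two leaves $a,b$ lie in the subtree of $v$ in $T$, whereas $c$ lies in the subtree of the sibling $u$, so the lowest common ancestor of $\{a,b\}$ is a strict descendant of the lowest common ancestor of $\{a,b,c\}$ in $T$, hence also in $T'$.

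\textbf{Equivalence (hard direction).} The main work is to show that any HC tree $T'$ whose restriction $T'|_{[k]}$ satisfies all emitted triplets must have $T'|_{[k]}\cong T$. I would argue by induction on $k$. The base case $k=3$ is immediate, since a single triplet $ab|c$ pins down the unique shape $(ab)c$. For the inductive step, I would use that every binary tree on $\ge 4$ leaves contains at least one \emph{cherry} $\{a,b\}$ (two leaves sharing a parent); the triplet emitted for that cherry's parent is $ab|c$, which forces $a,b$ to be merged (in the restriction $T'|_{[k]}$) before being merged with any other leaf listed as $c$ in our triplet set. Choosing representatives consistently (always propagating $r(v_L)$ upward so that if $a$ is in a cherry then every triplet involving $a$'s subtree uses $a$) ensures that contracting $b$ and removing the one triplet that mentions $b$ leaves exactly the triplet set that our algorithm would have produced on $T$ with the leaf $b$ deleted. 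Induction then finishes: $T'|_{[k]\setminus\{b\}}\cong T\setminus\{b\}$, and combined with the cherry constraint $ab|c$ this yields $T'|_{[k]}\cong T$.

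\textbf{Main obstacle.} The bookkeeping in the inductive step is the only subtle point: one must choose the representative function so that deleting a cherry leaf truly yields the algorithm's output on the reduced tree, rather than an ad~hoc subset of triplets. A clean way to avoid this is to fix the convention that $r(v)$ is the representative of $v$'s \emph{leftmost} child and to always pick the cherry whose removed leaf is the right sibling; then the triplets involving that leaf are exactly the one emitted for its parent, and the reduction is exact. Once this convention is in place, the induction goes through and yields the claimed equivalence with at most $k-2\le k$ triplets generated in linear time.
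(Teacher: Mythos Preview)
Your construction is essentially the paper's: both emit one triplet per non-root internal node by picking a representative leaf from each of the three relevant subtrees, yielding $k-2<k$ triplets in linear time. The paper phrases it as iterative bottom-up cherry contraction (find sibling leaves $a,b$, emit $ab|c$ where $c$ labels the uncle subtree, collapse the cherry), which is just your DFS unrolled cherry by cherry.

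Where you differ is the equivalence argument. The paper argues directly that for any triple $a,b,c$ with $\textrm{LCA}(a,c)=\textrm{LCA}(b,c)$ in $T$ the emitted set implies $ab|c$, because in the bottom-up process $a$ and $b$ become siblings strictly before either becomes a sibling of $c$; this is a transitivity-along-the-path sketch that leans on the chain of emitted triplets between $\textrm{LCA}(a,b)$ and $\textrm{LCA}(a,c)$. Your inductive cherry-contraction proof is more explicit: the key step you isolate---that with leftmost representatives and removing the \emph{right} leaf $b$ of a cherry, exactly one emitted triplet mentions $b$ and deleting it yields precisely the algorithm's output on $T\setminus\{b\}$---is what makes the induction clean, and together with the observation that the single constraint $ab|c$ (with $c$ in $a$'s sibling subtree in $T\setminus\{b\}$) forces $b$ to graft onto $a$'s pendant edge, it closes the argument rigorously. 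Both routes are valid; yours buys a fully spelled-out proof at the cost of the representative bookkeeping, while the paper's is shorter but leaves the transitivity chain implicit.
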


\textit{(2) Detecting feasibility.} The next step is to see if the set of triplet constraints is consistent, i.e. whether there exists a HC satisfying all the constraints. For this, we use a simple linear time algorithm called \texttt{BUILD}~\cite{ahoBUILD}. 

\textit{(3) Hard constraints vs. regularization.} \texttt{BUILD} can create a hierarchical decomposition that satisfies triplet constraints, but ignores the \textit{geometry} of the data, whereas our goal here is to consider both simultaneously. Moreover, in the case that the constraints are infeasible, we aim to output a clustering that minimizes the cost of violating constraints combined with the cost of the clustering itself.

$\mathbf{\bullet}$~\textit{Feasible instance:} to output a feasible HC, we propose using \emph{Constrained Recursive Sparsest Cut} (\csc) or  \emph{Constrained Recursive Balanced Cut} (\Bcsc): two simple top-down algorithms which are natural adaptations of recursive sparsest cut~\citep{mann2008use, dasguptaSTOC} or recursive balanced cut~\cite{vaggosSODA} to respect constraints \hyperref[sec:CSC]{(Section~\ref{sec:CSC})}. 

$\mathbf{\bullet}$~\textit{Infeasible instance:} in this case, we turn our attention to a regularized version of HC, where the cost of violating constraints is added to the tree cost. We then propose an adaptation of \csc, namely \emph{Hypergraph Recursive Sparsest Cut} (\HRSC) for the regularized problem \hyperref[sec:HRSC]{(Section~\ref{sec:HRSC})}.

\paragraph{Real-world application example.} In phylogenetics, which is the study of the evolutionary history and relationships among species, an end-user usually has access to whole genomes data of a group of organisms. There are established methods in phylogeny to infer similarity scores between pairs of datapoints, which give the user the similarity weights $w_{ij}$. Often the user also has access to rare structural footprints of a common ancestry tree (e.g. through gene rearrangement data, gene inversions/transpositions etc., see~\cite{phylogenomics}). These rare, yet informative, footprints play the role of the structural constraints. The user can follow our pre-processing step to get triplet constraints from the given rare footprints, and then use Aho’s BUILD algorithm to choose between regularized or hard version of the HC problem. The above illustrates how to use our workflow and why using our algorithms facilitates HC when expert domain knowledge is available.

\paragraph{Further related work.} Similar to~\cite{dasguptaICML}, constraints in the form of triplet queries have been used in an (adaptive) active learning framework by~\cite{tamuzICML,kempeSODA}, showing that approximately $O(n\log n)$ triplet queries are enough to learn an underlying HC. Other forms of user interaction in order to improve the quality of the produced clusterings have been used in~\cite{nina1,nina2} where they prove that interactive feedback in the form of cluster split/merge requests can lead to significant improvements. Robust algorithms for HC in the presence of noise were studied in~\cite{nina3} and a variety of sufficient conditions on the similarity function that would allow linkage-style methods to produce good clusters was explored in~\cite{nina4}. On a different setting, the notion of triplets has been used as a measure of \textit{distance} between hierarchical decomposition trees on the same data points~\cite{brodal2013efficient}. More technically distant analogs of how to use relations among triplets points have recently been proposed in~\cite{kernelNIPS} for defining kernel functions corresponding to high-dimensional embeddings.


\section{Constrained Sparsest (Balanced) Cut}
\label{sec:CSC}

Given an instance of the constrained hierarchical clustering,  our proposed \csc~algorithm uses a blackbox $\alpha_n$-approximation algorithm for the sparsest cut problem (the best-known approximation factor for this problem is $O(\sqrt{\log n})$ due to~\cite{arora2009expander}). Moreover, it also maintains the feasibility of the solution in a top-down approach by recursive partitioning of what we call the \textit{supergraph} $G'$. Informally speaking, the supergraph is a simple data structure to track the progress of the algorithm and the resolved constraints. 

More formally,  for every constraint $ab|c$ we merge the nodes $a$ and $b$ into a \textit{supernode} $\{a,b\}$ while maintaining the edges in $G$ (now connecting to their corresponding supernodes). Note that $G'$ may have parallel edges, but this can easily be handled by grouping edges together and replacing them with the sum of their weights. We repeatedly continue this \textit{merging} procedure until there are no more constraints. Observe that any feasible solution needs to start splitting the original graph $G$ by using a cut that is also present in $G'$.  When cutting the graph $G'=(G_1,G_2)$, if a constraint $ab|c$ is \textit{resolved},\footnote{A constraint $ab|c$ is \textit{resolved}, if $c$ gets separated from $a,b$.} then we can safely \textit{unpack} the supernode $\{a,b\}$ into two nodes again (unless there is another constraint $ab|c'$ in which case we should keep the supernode). By continuing and recursively finding approximate sparsest cuts on the supergraph $G_1$ and $G_2$, we can find a feasible hierarchical decomposition of $G$ respecting all triplet constraints. Next, we show the approximation guarantees for our algorithm.

\begin{algorithm}[ht]
\caption{\texttt{CRSC}}
\label{alg: Constrained Sparsest Cut}
\begin{algorithmic}[1]

\STATE Given $G$ and the triplet constraints $ab|c$, run \texttt{BUILD} to create the supergraph $G'$.
\STATE Use a blackbox access to an $\alpha_n$-approximation oracle for the sparsest cut problem, i.e. ${\argmin}_{S\subseteq V}\frac{w_{G'}(S,\bar{S})}{\lvert S\rvert \cdot\lvert \bar{S}\rvert}$.
\STATE Given the output cut $(S,\bar{S})$, separate the graph $G'$ into two pieces $G_1(S,E_1)$ and $G_2(V\setminus S,E_2)$. 
\STATE Recursively compute a HC $T_1$ for $G_1$ using only $G_1$'s active constraints. Similarly compute $T_2$ for $G_2$.
\STATE Output $T=(T_1,T_2)$.
\end{algorithmic}
\end{algorithm}
 \vspace{-0.3cm}
\paragraph{Analysis of \csc~Algorithm.} The main result of this section is the following theorem:

\begin{theorem}\label{th:main}
Given a weighted graph $G(V,E,w)$ with $k$ triplet constraints $ab|c$ for $a,b,c\in V$, the \csc\ algorithm outputs a HC respecting \textit{all} triplet constraints and achieves an
$O(k\alpha_n)$-approximation for the HC-similarity objective as in \hyperref[obj]{(\ref{obj})}.
\end{theorem}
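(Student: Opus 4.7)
The argument splits into two parts: feasibility of the output, and the approximation ratio.

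\emph{Feasibility.} I argue by induction on the recursion depth of \csc. The invariant is that at each recursive call the supergraph $G'$ encodes the unresolved constraints by keeping $a,b$ inside a common supernode for every active $ab|c$; consistency of the initial constraint set (and hence well-definedness of $G'$) is guaranteed by the preprocessing call to \texttt{BUILD}. Any cut of $G'$ leaves supernodes intact, so the cut returned by the sparsest-cut oracle is automatically compatible with every active constraint. A constraint $ab|c$ is resolved on a given call exactly when $c$ is separated from the supernode containing $\{a,b\}$; we then unpack that supernode (unless another still-active constraint $ab|c'$ forces $a,b$ to remain merged) and recurse. The base case of a subproblem with at most two leaves is trivial.

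\emph{Approximation.} I extend the recursive-sparsest-cut analysis of~\cite{dasguptaSTOC,vaggosSODA} to the constrained setting. Let $T_C^*$ be an optimal constrained HC with cost $\opt_C$ and let $(A^*,B^*)$ be the root cut of $T_C^*$ on the current call with $n$ original leaves. Since $T_C^*$ respects every triplet constraint, $(A^*,B^*)$ is a valid cut of the supergraph $G'$, so the $\alpha_n$-approximate sparsest-cut oracle returns a cut $(S,\bar S)$ satisfying
\[
\frac{w(S,\bar S)}{|S|_{G'}\cdot|\bar S|_{G'}} \;\leq\; \alpha_n\cdot\frac{w(A^*,B^*)}{|A^*|_{G'}\cdot|B^*|_{G'}}.
\]
The algorithm's contribution to the HC-similarity cost at this call is $n\cdot w(S,\bar S)$, whereas OPT's contribution is at least $n\cdot w(A^*,B^*)$, so the ratio is at most $\alpha_n\cdot\frac{|S|_{G'}|\bar S|_{G'}}{|A^*|_{G'}|B^*|_{G'}}$. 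The structural lemma I need is that this ratio is $O(k)$: the total ``extra mass'' introduced by merging is bounded because each of the $k$ constraints contributes at most one additional original vertex to some supernode, giving $n-|V(G')|\leq k$ and individual supernode sizes at most $k+1$. With the $k$ constraints partitioning into $k_L+k_R\leq k$ between the two recursive calls, the inductive hypothesis on the subproblems then yields $\text{cost}(\csc)=O(k\alpha_n)\cdot \opt_C$.

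\emph{Main obstacle.} The subtlety is obtaining an $O(k)$ cut-measure blow-up rather than the naive $O(k^2)$ one gets by applying $|X|_{\mathrm{orig}}\leq(k+1)|X|_{G'}$ on each side independently. I expect to handle this via a case analysis on supergraph balance: when $(S,\bar S)$ is reasonably balanced in $G'$ the extra-mass term $n-|V(G')|\leq k$ contributes only an $O(1+k)$ factor, while for highly imbalanced cuts the small side contains only $O(k)$ original vertices and its charge can be absorbed into the recursive subproblem through the induction. An alternative is to lower-bound $\opt_C$ by the spreading-metric LP used in~\cite{vaggosSODA}, where the constraints can be encoded by fixing the LP variables on supernodes, bypassing the counting step entirely and yielding the $O(k\alpha_n)$ bound directly.
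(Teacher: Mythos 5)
Your feasibility argument is fine and matches the paper's reasoning. The approximation argument, however, has a genuine gap: you charge the algorithm's cut at each call against the root cut $(A^*,B^*)$ of the optimal constrained tree. The sparsest-cut guarantee only controls the \emph{sparsity}, so what you actually obtain is a loss of $\alpha_n\cdot\frac{|S|_{G'}|\bar S|_{G'}}{|A^*|_{G'}|B^*|_{G'}}$ per call, and this balance ratio can be $\Theta(n)$ independently of $k$ (e.g.\ when \opt\ peels off a single vertex, the denominator is $n-1$ while the oracle may return a balanced cut with numerator $\Theta(n^2)$). Your ``main obstacle'' paragraph addresses supernode sizes, which is not where the factor $k$ comes from; bounding $n-|V(G')|\le k$ does nothing to repair the balance ratio. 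Moreover, even granting a per-call bound, the recursion does not telescope: summing per-call multiplicative losses over a recursion of depth up to $n$ does not yield a global $O(k\alpha_n)$ factor. This is precisely why the unconstrained analyses of Dasgupta and of Charikar--Chatziafratis already avoid root-to-root charging, and the issue is orthogonal to the presence of constraints.

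The paper's proof instead decomposes $\opt$ into levels ($\opt=\sum_t w(\opt(t))$, Fact~\ref{fact:dec}) and, for each cluster $A$ of size $r$ produced by \csc, charges the cost $r\cdot w(B_1,B_2)$ to the edges that $\opt(\lfloor r/6k_A\rfloor)$ cuts \emph{inside} $A$ (Lemma~\ref{lem:mainproof}). The place where constraints genuinely bite --- and which your proposal does not engage with --- is that the partition $\{A_i\}$ induced inside $A$ by a low level of \opt\ need not be feasible for \csc, because \csc\ and \opt\ have different active constraints within $A$. The paper handles this by descending to level $r/(6k_A)$, so that the induced partition has at least $6k_A$ pieces while at most $2k_A$ of the candidate cuts $(A_i,A\setminus A_i)$ are blocked by active constraints; the surviving feasible cuts still carry at least half of the mass $\sum_i|A_i||A\setminus A_i|$, and an averaging argument over them upper-bounds the sparsest cut of the supergraph. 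The factor $k$ in the final bound then enters through Fact~\ref{fact:scale} (comparing $\sum_t w(\opt(\lfloor t/6k\rfloor))$ to $6k\cdot\opt$) and the disjointness argument of Lemma~\ref{lem:combine}, not through supernode sizes. Your LP fallback is not developed enough to assess and is not the route the paper takes.
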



\paragraph{Notations and Definitions.} We slightly abuse notation by having $\opt$ denote the optimum hierarchical decomposition or its optimum value as measured by \hyperref[obj]{(\ref{obj})}. Similarly for \csc. For $t\in [n]$, \opt$(t)$ denotes the maximal clusters in \texttt{OPT} of size at most $t$. Note that \texttt{OPT}$(t)$ induces a partitioning of $V$. We use \texttt{OPT}$(t)$ to denote edges cut by \texttt{OPT}$(t)$ (i.e. edges with endpoints in different clusters in \texttt{OPT}$(t)$) or their total weight; the meaning will be clear from context. For convenience, we define \texttt{OPT}$(0)$ $=\sum_{(i,j)\in E}w_{ij}$. For a cluster $A$ created by \csc, a constraint $ab|c$ is \textit{active} if $a,b,c\in A$, otherwise $ab|c$ is resolved and can be discarded.

\paragraph{Overview of the Analysis. } There are three main ingredients: The first is to view a HC of $n$ datapoints as a collection of partitions, one for each level $t = n-1,\dots, 1$, as in \citep{vaggosSODA}. For a level $t$, the partition consists of maximal clusters of size at most $t$. The total cost incurred by $\opt$ is then a combination of costs incurred at each level of this partition. This is useful for comparing our \csc\ cost with $\opt$. The second idea is in handling constraints and it is the main obstacle where previous analyses~\cite{vaggosSODA,vincentSODA} break down: constraints inevitably limit the possible cuts that are feasible at any level, and since the set of active constraints\footnote{All constraints are \textit{active} in the beginning of \csc.} differ for $\csc$ and \opt, a direct comparison between them is impossible. If we have no constraints, we can charge the cost of partitioning a cluster $A$ to lower levels of the $\opt$ decomposition. However, when we have triplet constraints, the partition induced by the lower levels of $\opt$ in a cluster $A$ \emph{will not be feasible in general} (\hyperref[fig:clusterA]{Figure~\ref{fig:clusterA}}). The natural way to overcome this obstacle is merging pieces of this partition so as to respect constraints and using higher levels of $\opt$, but it still may be impossible to compare \csc\ with \opt\ if all pieces are merged. We overcome this difficulty by an indirect comparison between the \csc\ cost and lower levels $\tfrac{r}{6k_A}$ of \opt, where $k_A$ is the number of active constraints in $A$. Finally, after a cluster-by-cluster analysis bounding the \csc\ cost for each cluster, we exploit disjointness of clusters of the same level in the \csc\ partition allowing us to combine their costs.
\begin{proof}[Proof of Theorem~\ref{th:main}]
We start by borrowing the following facts from \citep{vaggosSODA}, modified slightly for the purpose of our analysis (proofs are provided in the supplementary materials).
\begin{fact}[\textbf{Decomposition of \opt}]
\label{fact:dec}
The total cost paid by $\opt$ can be decomposed into costs of the different levels in the \opt\ partition, i.e. $\opt=\sum_{t=0}^n w({\opt(t))}$.
\end{fact}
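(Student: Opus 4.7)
The plan is a simple double-counting / order-of-summation swap. The key observation is that for each edge $(i,j)$, its contribution to $\opt$, namely $w_{ij}\cdot |T_{ij}|$, can be rewritten by interpreting the multiplier $|T_{ij}|$ as a count over levels $t$ of the \opt\ partition.

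First, I would introduce the shorthand $s_{ij} := |T_{ij}|$ for the size of the subtree at the lowest common ancestor of $i$ and $j$, so by definition $\opt = \sum_{(i,j)\in E} w_{ij}\cdot s_{ij}$. Next, I would establish the central structural lemma: an edge $(i,j)$ is cut by $\opt(t)$ if and only if $s_{ij} > t$. The ``only if'' direction uses that $T_{ij}$ is the smallest cluster of \opt\ containing both $i$ and $j$, so if $s_{ij}\le t$ then $T_{ij}$ itself is one of the maximal clusters of size $\le t$, placing $i$ and $j$ in the same part of $\opt(t)$. The ``if'' direction uses maximality: if $s_{ij}>t$, any cluster containing both $i$ and $j$ must contain $T_{ij}$ and hence have size $>t$, so $i$ and $j$ belong to distinct clusters of $\opt(t)$. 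The edge case $t=0$ is consistent with the convention $\opt(0)=\sum_{(i,j)\in E}w_{ij}$, and $t=n$ gives the trivial partition with no cuts, matching $s_{ij}\le n$.

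Given this equivalence, I would swap summations:
\begin{align*}
\sum_{t=0}^{n} w(\opt(t)) \;=\; \sum_{t=0}^{n}\sum_{\substack{(i,j)\in E:\\ s_{ij}>t}} w_{ij} \;=\; \sum_{(i,j)\in E} w_{ij}\cdot \bigl|\{t\in\{0,1,\dots,n\}: t<s_{ij}\}\bigr|.
\end{align*}
Since $2\le s_{ij}\le n$ for every edge, the inner count is exactly $s_{ij}$, so the right-hand side equals $\sum_{(i,j)\in E} w_{ij}\cdot s_{ij}=\opt$, which is the claim.

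There is no real obstacle here beyond being careful about the boundary conventions: explicitly checking that $\opt(0)$ counts every edge (corresponding to the $t=0$ term in the sum and the ``$t<s_{ij}$'' count including $0$), and that $\opt(n)$ contributes nothing. Once those are pinned down, the identity is a one-line interchange of sums, which is why the fact is stated without proof in \citep{vaggosSODA}; I would simply reproduce the double-counting step as above for self-containedness.
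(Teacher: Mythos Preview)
Your proposal is correct and follows essentially the same double-counting argument as the paper: both show that an edge $(i,j)$ contributes $w_{ij}$ to $w(\opt(t))$ precisely for the $s_{ij}$ values $t\in\{0,\dots,s_{ij}-1\}$, then sum over edges. One small slip: in your ``only if'' direction you say $T_{ij}$ itself is one of the maximal clusters of size $\le t$, but $T_{ij}$ may be strictly contained in a larger cluster that still has size $\le t$; the conclusion that $i,j$ lie in the same part of $\opt(t)$ is nonetheless correct, so this does not affect the argument.
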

\begin{fact}[\textbf{\opt~at scaled levels}]
\label{fact:scale}
Let $k\le \tfrac{n}{6}$ be the number of constraints. Then, $\opt\ge\tfrac{1}{6k}\cdot \sum_{t=0}^nw({\opt(\lfloor\tfrac{t}{6k}\rfloor))}$.
\end{fact}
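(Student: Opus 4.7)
The plan is to combine Fact~\ref{fact:dec} with the monotonicity of the map $t\mapsto w(\opt(t))$, and then exploit a clean counting argument on $\lfloor t/(6k)\rfloor$. The entire proof is combinatorial; the only step that deserves any thought is establishing monotonicity, and everything else is bookkeeping.

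First I would show that $w(\opt(s)) \ge w(\opt(t))$ whenever $s \le t$. By definition, $\opt(t)$ is the partition of $V$ induced by the \emph{maximal} clusters of $\opt$ of size at most $t$. As the threshold grows from $t$ to $t+1$, each maximal cluster either remains maximal or gets absorbed into a strictly larger ancestor cluster that now fits under the threshold. Hence the partition $\opt(t)$ only becomes coarser as $t$ grows, which means the set of edges cut by $\opt(t)$ (those with endpoints in different parts) can only shrink, and so its total weight is non-increasing in $t$.

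Second, I would group the terms of $\sum_{t=0}^n w(\opt(\lfloor t/(6k)\rfloor))$ by the common value $s = \lfloor t/(6k)\rfloor$. For each such $s$, there are at most $6k$ indices $t\in\{0,1,\dots,n\}$ producing that value, and the assumption $k\le n/6$ (equivalently $6k \le n$) guarantees that $s$ ranges only over $\{0,1,\dots,\lfloor n/(6k)\rfloor\} \subseteq \{0,1,\dots,n\}$. This yields
\[
\sum_{t=0}^n w(\opt(\lfloor t/(6k)\rfloor)) \;\le\; 6k \sum_{s=0}^{\lfloor n/(6k)\rfloor} w(\opt(s)) \;\le\; 6k \sum_{s=0}^n w(\opt(s)).
\]

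Finally, Fact~\ref{fact:dec} rewrites the rightmost sum as $\opt$, and dividing by $6k$ gives the desired inequality. The main obstacle is really only the monotonicity of $w(\opt(\cdot))$ — the rest reduces to counting how often each value of $s$ appears in the scaled index $\lfloor t/(6k)\rfloor$. The hypothesis $k\le n/6$ is used precisely to keep the range of $s$ inside the range over which Fact~\ref{fact:dec} expresses $\opt$ as a telescoping-style sum.
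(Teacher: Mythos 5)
Your proof is correct and is essentially the paper's own argument read in reverse: the paper starts from $6k\cdot\opt=6k\sum_{t=0}^n w(\opt(t))$, drops the nonnegative terms beyond $\lfloor n/(6k)\rfloor$, and identifies what remains with the scaled sum, which is exactly your multiplicity count that each value of $\lfloor t/(6k)\rfloor$ is attained by at most $6k$ indices $t$. One small remark: the monotonicity of $t\mapsto w(\opt(t))$ that you establish in your first paragraph is never actually used in your steps two and three --- nonnegativity of $w(\opt(t))$ alone suffices both for the counting bound and for extending the range of $s$ --- so that paragraph can be deleted without loss.
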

\begin{figure}[ht]
\begin{center}
\centerline{\includegraphics[scale=0.6]{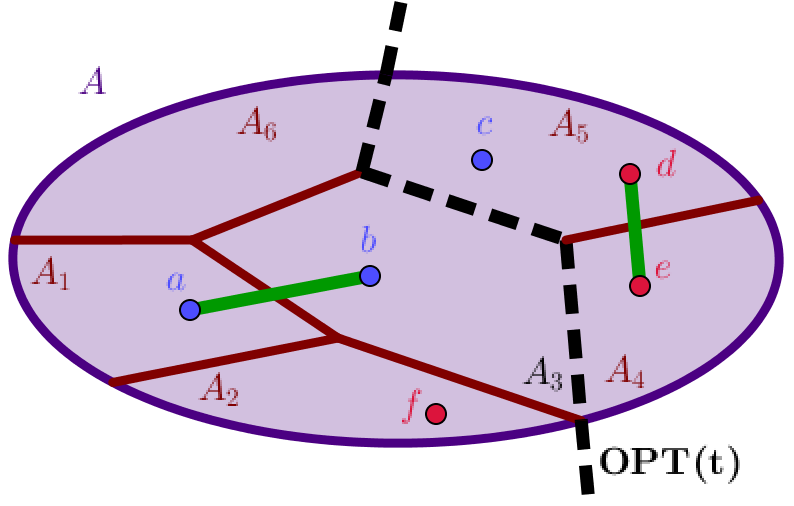}} 
\caption{The main obstacle in the constrained HC problem is that our algorithm has different \textit{active constraints} compared to \texttt{OPT}. Both $ab|c,de|f$ constraints are resolved by the cut \texttt{OPT(t)}. \label{fig:clusterA}}
\end{center}
\vspace{-0.6cm}
\end{figure}
Given the above facts, we look at any cluster $A$ of size $r$ produced by the algorithm. Here is the main technical lemma that allows us to bound the cost of $\csc$ for partitioning $A$.
\begin{lemma}\label{lem:mainproof}
Suppose $\csc$ partitions a cluster $A$ ($|A|=r$) in two clusters $(B_1,B_2)$ (w.l.o.g. $|B_1|=s,|B_2|=r-s,s\le \lfloor \tfrac r2\rfloor \le r-s$). Let the size $r\ge 6k$ and let $l=6k_A$, where $k_A$ denotes the number of \textit{active constraints} for $A$. Then: $r\cdot w(B_1,B_2)\le 4\alpha_n \cdot s\cdot w(\opt(\lfloor\tfrac{r}{l}\rfloor)\cap A)$.
\end{lemma}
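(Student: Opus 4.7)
My plan is to exhibit a specific feasible cut in the supergraph of $A$ whose sparsity is bounded in terms of $w(\opt(\lfloor r/l \rfloor)\cap A)/r^2$, and then invoke the $\alpha_n$-approximation guarantee of the sparsest cut oracle to transfer this upper bound to $(B_1, B_2)$. To construct such a candidate, I would start from the partition of $A$ induced by $\opt(\lfloor r/l \rfloor)$, obtaining pieces $P_1, \dots, P_m$ of size at most $\lfloor r/l \rfloor = \lfloor r/(6k_A)\rfloor$; since these partition $A$, we have $m \geq 6k_A$. As \hyperref[fig:clusterA]{Figure~\ref{fig:clusterA}} illustrates, this partition need not be feasible in the supergraph of $A$, because an active constraint $ab|c$ might place $a$ and $b$ in different pieces. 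I would fix this by adding, for each active constraint, an auxiliary edge between the pieces containing $a$ and $b$, and defining the groups $Q_1, \dots, Q_{m'}$ to be the connected components of the resulting auxiliary graph on pieces. Because at most $k_A$ auxiliary edges are added, $m' \geq 5k_A$, and each group is the union of at most $k_A+1$ pieces, hence of size at most $(k_A+1)\lfloor r/(6k_A)\rfloor \leq r/3$. By construction, every active constraint has both of its ``must-together'' endpoints inside one group, so each cut $(Q_j, A\setminus Q_j)$ is feasible in the supergraph.

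Next I would use a simple averaging argument over the $m'$ groups to extract a low-sparsity feasible cut. Any edge cut by $(Q_j, A\setminus Q_j)$ runs between two different pieces of $\mathcal{P}$ and is counted for exactly two values of $j$, so $\sum_j w(Q_j, A\setminus Q_j) \leq 2\,w(\opt(\lfloor r/l \rfloor)\cap A)$. On the denominator side, $|Q_j| \leq r/3$ yields
\[\sum_j |Q_j|(r - |Q_j|) = r^2 - \sum_j |Q_j|^2 \geq r^2 - (r/3)\cdot r = \tfrac{2}{3}r^2.\]
The elementary bound $\min_j (N_j/D_j) \leq (\sum_j N_j)/(\sum_j D_j)$ then produces a group $Q^*$ with sparsity at most $3\,w(\opt(\lfloor r/l \rfloor)\cap A)/r^2$. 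Using $(Q^*, A\setminus Q^*)$ as the witness for the $\alpha_n$-approximation guarantee that $(B_1, B_2)$ satisfies on the supergraph gives
\[\frac{w(B_1, B_2)}{s(r-s)} \leq \alpha_n \cdot \frac{w(Q^*, A\setminus Q^*)}{|Q^*|(r-|Q^*|)} \leq \frac{3\alpha_n\,w(\opt(\lfloor r/l \rfloor)\cap A)}{r^2},\]
and cross-multiplying then invoking $r-s \leq r$ yields $r\cdot w(B_1, B_2) \leq 3\alpha_n\cdot s\cdot w(\opt(\lfloor r/l \rfloor)\cap A)$, which implies the claimed inequality (with constant $4$).

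The main obstacle is the group-construction step: after modifying $\opt$'s partition of $A$ so that it respects the active constraints in the supergraph, we need both (i) enough groups so that the averaging denominator $\sum_j |Q_j|(r-|Q_j|)$ is $\Omega(r^2)$, and (ii) no group so large that it single-handedly destroys the balance $r-|Q_j|\geq \Omega(r)$. The choice $l = 6k_A$, which also drives the $1/(6k)$ loss in \hyperref[fact:scale]{Fact~\ref{fact:scale}}, is calibrated for exactly this trade-off: the $\geq 6k_A$ initial pieces of size $\leq r/(6k_A)$ can absorb the $k_A$ constraint-induced merges and still leave at least $5k_A$ well-separated groups of size $\leq r/3$, which is precisely what the averaging argument needs.
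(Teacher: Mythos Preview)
Your argument is correct and in fact yields the slightly sharper constant $3$ in place of the paper's $4$. The overall architecture is the same as the paper's proof: start from the pieces $A_1,\dots,A_m$ that $\opt(\lfloor r/l\rfloor)$ induces on $A$, manufacture feasible witness cuts out of these pieces, average their sparsities, and compare to $(B_1,B_2)$ via the $\alpha_n$-approximation guarantee. The genuine difference is in how the infeasibility of some piece-cuts is repaired. The paper \emph{discards} the bad pieces: it observes that each active constraint can invalidate at most two of the singleton cuts $(A_i,A\setminus A_i)$, so at least $m-2k_A$ of them remain feasible, and it averages only over that feasible index set $F$; the denominator control then comes from the estimate $\sum_{i\in F}\gamma_i(1-\gamma_i)\ge\tfrac{4k_A-1}{6k_A}\ge\tfrac12$. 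You instead \emph{merge} pieces along the $k_A$ auxiliary edges, so that every resulting group cut $(Q_j,A\setminus Q_j)$ is automatically feasible; your denominator control comes from the size bound $|Q_j|\le (k_A+1)\lfloor r/(6k_A)\rfloor\le r/3$, which gives $\sum_j|Q_j|(r-|Q_j|)\ge\tfrac23 r^2$. Both devices exploit the same calibration $l=6k_A$, just from opposite ends: the paper needs enough feasible pieces left after deletion, you need the merged groups to stay small. Your merging viewpoint is arguably cleaner (every cut in the average is feasible, so there is no need to track the set $F$) and it recovers a bit more in the denominator, which is why you land at $3\alpha_n$ rather than $4\alpha_n$.
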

\begin{proof}
The cost incurred by $\csc$ for partitioning $A$ is $r\cdot w(B_1,B_2)$. Now consider $\opt(\lfloor\tfrac{r}{l}\rfloor)$. This induces a partitioning of  $A$ into pieces $\{A_i\}_{i\in[m]}$, where by design  $|A_i|=\gamma_i |A|$, $\gamma_i\le \tfrac 1l, \forall i \in [m]$. Now, consider the cuts $\{(A_i,A\setminus A_i)\}$. Even though all $m$ cuts are allowed for $\opt$, for $\csc$ some of them are forbidden: for example, in \hyperref[fig:clusterA]{Figure~\ref{fig:clusterA}}, the constraints $ab|c,de|f$ render 4 out of the 6 cuts infeasible. But how many of them can become infeasible with $k_A$ active constraints? Since every constraint is involved in at most 2 cuts, we may have at most $2k_A$ infeasible cuts. Let $F\subseteq [m]$ denote the index set of feasible cuts, i.e. if $i\in F$, the cut $(A_i,A\setminus A_i)$ is feasible for $\csc$. To cut $A$, we use an $\alpha_n$-approximation of sparsest cut, whose sparsity is upper bounded by \textit{any feasible} cut:
\begin{align*}
\frac{w(B_1,B_2)}{s(r-s)}&\le \alpha_n \cdot \textrm{SP.CUT}(A)\le\alpha_n \min_{i\in F}\frac{w(A_i,A\setminus A_i)}{|A_i||A\setminus A_i|}
\le  \alpha_n \frac{\sum_{i\in F} w(A_i,A\setminus A_i)}{\sum_{i\in F} |A_i||A\setminus A_i|}
\end{align*}
where for the last inequality we used the standard fact that $\min_{i}\tfrac{\mu_i}{\nu_i}\le\tfrac{\sum_i\mu_i}{\sum_i\nu_i}$ for $\mu_i\ge0$ and $\nu_i>0$. We also have the following series of inequalities:
\begin{align*}
\alpha_n \frac{\sum_{i\in F} w(A_i,A\setminus A_i)}{\sum_{i\in F} |A_i||A\setminus A_i|}&\le \alpha_n\frac{2w(\opt(\lfloor\tfrac rl\rfloor)\cap A)}{r^2\sum_{i\in F}\gamma_i(1-\gamma_i)}
\le 4\alpha_n\frac{w(\opt(\lfloor\tfrac rl\rfloor)\cap A)}{r^2}
\end{align*}
where the first inequality holds because we double count some (potentially all) edges of $\opt(\lfloor\tfrac rl\rfloor)\cap A$ (these are the edges cut by $\opt(\lfloor\tfrac rl\rfloor)$ that are also present in cluster $A$, i.e. they have both endpoints in $A$) and the second inequality holds because $\gamma_i\le \frac{1}{6k}\implies 1-\gamma_i\ge \frac{6k-1}{6k}$ and 
\begin{align*}
\sum_{i\in F} \gamma_i(1-\gamma_i)&\ge \sum_{i=1}^m\gamma_i(1-\gamma_i)-2\sum_{i\in [m]\setminus F}\frac{1}{6k}\ge \frac{6k-1}{6k}\sum_{i=1}^m\gamma_i-\frac{2k}{6k}=\frac{4k-1}{6k}\ge 1/2~
\end{align*}
Finally, we are ready to prove the lemma by combining the above inequalities ($\tfrac{r-s}{r}\le1$):
\begin{align*}
r\cdot w(B_1,B_2)&=r\cdot s(r-s)\cdot \frac{w(B_1,B_2)}{s(r-s)}\\
&\le r\cdot s(r-s)\cdot 4\alpha_n\frac{w(\opt(\lfloor\tfrac rl\rfloor)\cap A)}{r^2}\le 4\alpha_n \cdot s\cdot w(\opt(\lfloor\tfrac{r}{l}\rfloor)\cap A).\qedhere
\end{align*}
\end{proof}
It is clear that we exploited the charging to lower levels of \opt, since otherwise if all pieces in $A$ were merged, the denominator with the $|A_i|$'s would become 0. The next lemma lets us combine the costs incurred by $\csc$ for different clusters $A$ (proof is in the supplementary materials)
\begin{lemma}[\textbf{Combining the costs of clusters in \csc}]
\label{lem:combine}
The total \csc\ cost for partitioning all clusters $A$ into $(B_1,B_2)$ (with $|A|=r_A,|B_1|=s_A$) is bounded by:
\begin{align}
&(1)~~\sum_{A:\lvert A\rvert \geq 6k} r_A\cdot w(B_1,B_2)\le O(\alpha_n) \cdot\sum_{t=0}^nw(\opt(\lfloor\tfrac{t}{6k}\rfloor))\nonumber\\
&(2)~~\sum_{A: \lvert A\rvert<6k}r_Aw(B_1,B_2)\le 6k\cdot \opt\nonumber
\end{align}
\end{lemma}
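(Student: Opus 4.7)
The plan is to derive both parts by aggregating the per-cluster bound of \lemref{lem:mainproof} via an edge-level charging that telescopes along the \csc\ ancestor chain.

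Part~(2) is the simpler direction: the trivial bound $r_A<6k$ gives $r_A\cdot w(B_1,B_2)\le 6k\cdot w(B_1,B_2)$, and since edges are cut by distinct \csc\ clusters (each edge has a unique LCA in the \csc\ tree), we have $\sum_A w(B_1,B_2)=\sum_{(i,j)\in E}w_{ij}\le \opt$ (because $\opt$ charges each edge by at least $|T_{ij}|\ge 1$). The claimed bound $\sum_{A: r_A<6k} r_A\cdot w(B_1,B_2)\le 6k\cdot \opt$ follows by summing.

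Part~(1) proceeds in two steps. First, I would invoke \lemref{lem:mainproof} for each large cluster $A$; since $k_A\le k$ and $w(\opt(\cdot)\cap A)$ is non-increasing in its argument (larger argument means coarser $\opt$ partition cutting fewer edges), the level $l=6k_A$ in the lemma can be safely replaced by the uniform $6k$, yielding
\begin{align*}
r_A\cdot w(B_1,B_2)\le 4\alpha_n\cdot s_A\cdot w(\opt(\lfloor r_A/(6k)\rfloor)\cap A)
\end{align*}
for every cluster $A$ with $r_A\ge 6k$. Second, I would sum via an edge-by-edge argument. Fix an edge $e=(i,j)$ of weight $w_e$ and consider the chain of \csc\ ancestors $A_0\subset A_1\subset\cdots\subset A_M=V$ of the \csc\ LCA $A_0$ of $i,j$. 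The indicator $[e\in\opt(\lfloor r_{A_\ell}/(6k)\rfloor)]$ is non-increasing in $\ell$, hence equals $1$ on an initial segment $\ell\le \ell^*$ with $r_{A_{\ell^*}}<6k|T_{ij}|$ (where $|T_{ij}|$ denotes the size of $\opt$'s LCA cluster for $i,j$). The total contribution of $e$ to $\sum_A s_A\cdot w(\opt(\lfloor r_A/(6k)\rfloor)\cap A)$ is then $w_e\sum_{\ell=0}^{\ell^*}s_{A_\ell}$.

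The crucial step is the telescoping identity $s_{A_\ell}\le r_{A_\ell}-r_{A_{\ell-1}}$: since $A_{\ell-1}$ is one of the two children of $A_\ell$, it is either the larger child (so $s_{A_\ell}$ equals the other child's size $r_{A_\ell}-r_{A_{\ell-1}}$) or the smaller child (so $s_{A_\ell}=r_{A_{\ell-1}}\le r_{A_\ell}-r_{A_{\ell-1}}$). Telescoping gives $\sum_{\ell=0}^{\ell^*}s_{A_\ell}\le r_{A_{\ell^*}}$, and a direct count shows $e$ contributes $\min(n+1,6k|T_{ij}|)\ge r_{A_{\ell^*}}$ to the RHS $\sum_t w(\opt(\lfloor t/(6k)\rfloor))$. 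Aggregating over $e$ delivers part~(1).

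\emph{Main obstacle.} The delicate point is establishing the telescoping identity $s_{A_\ell}\le r_{A_\ell}-r_{A_{\ell-1}}$ along the ancestor chain; the naive alternative $s_{A_\ell}\le r_{A_\ell}/2$ would accumulate a spurious $\log n$ factor and break the desired $O(k\alpha_n)$ approximation. This identity is also what ties \csc's top-down cost to $\opt$, despite \csc\ and $\opt$ potentially having very different active-constraint structures inside any given cluster.
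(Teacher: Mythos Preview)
Your proof is correct and follows essentially the same approach as the paper. Part~(2) is identical. For Part~(1), both arguments invoke \lemref{lem:mainproof}, replace $6k_A$ by the uniform $6k$ via monotonicity of $w(\opt(\cdot)\cap A)$, and then aggregate; the only difference is bookkeeping in the aggregation step. The paper expands each $s_A$ as a sum of $s_A$ unit terms indexed by $t\in\{r_A-s_A+1,\ldots,r_A\}$, uses monotonicity once more to replace $r_A$ by $t$ inside $\opt(\lfloor\cdot/(6k)\rfloor)$, swaps the order of summation, and for each fixed $t$ uses that the contributing clusters (the minimal \csc\ clusters of size $\ge t$) are pairwise disjoint so that $\sum_A w(\opt(\lfloor t/(6k)\rfloor)\cap A)\le w(\opt(\lfloor t/(6k)\rfloor))$. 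Your edge-by-edge telescoping $s_{A_\ell}\le r_{A_\ell}-r_{A_{\ell-1}}$ along the ancestor chain is the dual accounting of the same fact: the intervals $(r_{A_\ell}-s_{A_\ell},\,r_{A_\ell}]$ are disjoint along any root-to-leaf path in the \csc\ tree, which is exactly what makes the telescope collapse to $r_{A_{\ell^*}}$. Neither route gains anything over the other; they are two equivalent ways to reorganize the same double sum.
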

Combining   \hyperref[fact:scale]{Fact~\ref{fact:scale}} and  \hyperref[lem:combine]{Lemma~\ref{lem:combine}} finishes the proof. 
\end{proof}

\begin{remark} In the supplementary material, we prove how one can use \emph{balanced cut}, i.e. finding a cut $S$ such that
\begin{equation}
\underset{S\subseteq V:\lvert S\rvert \geq n/3,\lvert\bar{S}\rvert\geq n/3}{\argmin}w_{G'}(S,\bar{S})
\end{equation}
instead of sparsest cut, and using approximation algorithms for this problem achieves the same approximation factor as in \hyperref[th:main]{Theorem~\ref{th:main}}, but with better running time. 
\end{remark}

\begin{remark}
Optimality of the CRSC algorithm: Note that complexity theoretic lower-bounds for the unconstrained version of HC from~\cite{vaggosSODA} also apply to our setting; more specifically, they show that no constant factor approximation exists for HC assuming the Small-Set Expansion Hypothesis.
\end{remark}

\begin{theorem}[The divisive algorithm using balanced cut] 
Given a weighted graph $G(V,E,w)$ with $k$ triplet constraints $ab|c$ for $a,b,c\in V$, the constrained recursive balanced cut algorithm~\Bcsc~(same as $\csc$, but using balanced cut instead of sparsest cut) outputs a HC respecting \textit{all} triplet constraints and achieves an $O(k\alpha_n)$-approximation for Dasgupta's HC objective. Moreover, the running time is almost linear time.
\end{theorem}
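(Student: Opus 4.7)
The plan is to follow the same template used for \csc\ in \hyperref[th:main]{Theorem~\ref{th:main}}, replacing the sparsest cut oracle with an approximate balanced cut oracle and exploiting the balance condition $|B_1|,|B_2|\ge r/3$ to streamline both the analysis and the running time. Feasibility is immediate: \Bcsc\ operates on the same supergraph $G'$ constructed from \texttt{BUILD}, so every cut it makes respects all triplet constraints that are still active, and supernodes are unpacked exactly as in \csc\ once their associated constraints are resolved.

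For the approximation guarantee, I would keep Facts~\ref{fact:dec} and \ref{fact:scale} verbatim and re-prove an analogue of \hyperref[lem:mainproof]{Lemma~\ref{lem:mainproof}}. Fix a cluster $A$ of size $r\ge 6k$ split by \Bcsc\ into $(B_1,B_2)$ with $|B_1|=s\ge r/3$, and let $l=6k_A$. Consider the partition $\{A_i\}_{i\in[m]}$ of $A$ induced by $\opt(\lfloor r/l\rfloor)$; each piece satisfies $|A_i|\le r/l$. I would build a \emph{reference balanced feasible cut} of $A$ by greedily packing the pieces $\{A_i\}$ into two bins so that each bin ends up with total size in $[r/3,2r/3]$—always possible because each piece is at most an $1/l$ fraction of $A$—while avoiding the at most $2k_A$ piece-level separations forbidden by active constraints. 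Since each active constraint $ab|c$ rules out only the pairings that place $a$ and $b$ in different bins, and the greedy packing has many feasible continuations once $l$ is large relative to $k_A$, a feasible balanced packing always exists (this is where the choice $l=6k_A$ is used). Call this reference cut $(R,\bar R)$; its weight satisfies $w(R,\bar R)\le \sum_{i}w(A_i,A\setminus A_i)\le 2\,w(\opt(\lfloor r/l\rfloor)\cap A)$.

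Applying the $\alpha_n$-approximate balanced cut oracle against this reference cut gives $w(B_1,B_2)\le \alpha_n\cdot w(R,\bar R)\le 2\alpha_n\cdot w(\opt(\lfloor r/l\rfloor)\cap A)$, and multiplying by $r$ and using $s\le r$ yields
\[
r\cdot w(B_1,B_2)\;\le\; O(\alpha_n)\cdot s\cdot w(\opt(\lfloor r/l\rfloor)\cap A),
\]
the exact analogue of \hyperref[lem:mainproof]{Lemma~\ref{lem:mainproof}}. I would then invoke \hyperref[lem:combine]{Lemma~\ref{lem:combine}} verbatim to combine costs across all clusters of \Bcsc, separating clusters with $|A|\ge 6k$ (where the level-$\lfloor t/6k\rfloor$ charging of Fact~\ref{fact:scale} applies) from smaller clusters (which contribute at most $6k\cdot\opt$), yielding the $O(k\alpha_n)$ bound. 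Both subroutines of \Bcsc—\texttt{BUILD} and approximate balanced cut—admit (almost) linear-time implementations, so the tree construction runs in almost linear time overall.

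The main obstacle is the feasibility-preserving packing argument in the construction of the reference balanced cut: one has to check that, with $k_A$ active constraints and pieces of size at most $r/(6k_A)$, a bin-packing into two groups of size in $[r/3,2r/3]$ that avoids separating any constrained pair $\{a,b\}$ can always be produced, and that its weight is still comparable to $w(\opt(\lfloor r/l\rfloor)\cap A)$. The averaging trick used for sparsest cut (``$\min_i \mu_i/\nu_i \le \sum_i\mu_i/\sum_i\nu_i$'') is not directly available here, so the reference cut has to be exhibited explicitly rather than averaged over; this is the step that requires the most care.
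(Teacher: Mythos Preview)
Your proposal follows essentially the same route as the paper: exhibit a feasible balanced reference cut by grouping the pieces of $\opt(\lfloor r/(6k_A)\rfloor)$ inside $A$ into two sides of size in $[r/3,2r/3]$, compare the oracle's output against it, and then reuse \hyperref[lem:combine]{Lemma~\ref{lem:combine}} and \hyperref[fact:scale]{Fact~\ref{fact:scale}} unchanged; the paper in fact simply asserts that such a balanced grouping exists without spelling out the feasibility-preserving packing you flag, so your attention to that step is well placed. Two small fixes: $w(R,\bar R)\le w(\opt(\lfloor r/l\rfloor)\cap A)$ holds directly with no factor~$2$ (every edge crossing $(R,\bar R)$ already crosses the $\opt$-partition of $A$), and the justification ``using $s\le r$'' should read ``using $s\ge r/3$'' from the balance guarantee, which is what actually lets you replace $r$ by $O(s)$.
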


\section{Constraints and Regularization}
\label{sec:HRSC}
Previously, we assumed that constraints were feasible. However, in many practical applications, users/experts may disagree, hence our algorithm may receive conflicting constraints as input. Here we want to explore how to still output a satisfying HC that is a good in terms of objective\hyperref[obj]{~(\ref{obj})} (similarity-HC) and also respects the constraints as much as possible. To this end, we propose a \emph{regularized} version of Dasgupta's objective, where the regularizer measures quantitatively the degree by which constraints get violated.

Informally, the idea is to penalize a constraint more if it is violated at top levels of the decomposition compared to lower levels. We also allow having different violation \emph{weights} for different constraints (potentially depending on the expertise of the users providing the constraints). More concretely, inspired by the Dasgupta's original objective function, we consider the following optimization problem: 
\begin{align}
\min_{T\in \mathcal{T}}\bigg{(}&\sum_{(i,j)\in E}w_{ij}|T_{ij}|+\lambda\cdot\sum_{ab|c \in \mathcal{K}}c_{ab|c}|T_{ab}|\cdot \mathbf{1}\{ab|c \text{\ is violated}\}\bigg{)}, \label{obj2}
\end{align} 
where $\mathcal{T}$ is the set of all possible binary HC trees for  the given data points, $\mathcal{K}$ is the set of the $k$ triplet constraints, $T_{ab}$ is the size of the subtree rooted at the least common ancestor of $a,b$,  and $c_{ab|c}$ is defined as the \emph{base cost} of violating triplet constraint $ab|c$. Note that the regularization parameter $\lambda\ge0$ allows us to interpolate between satisfying the constraints or respecting the geometry of the data. 



\paragraph{Hypergraph Recursive Sparsest Cut} In order to design approximation algorithms for the regularized objective, 
we draw an interesting connection to a different problem, which we call \emph{3-Hypergraph Hierarchical Clustering (3HHC)}. An instance of this problem consists of a hypergraph $G^{\mathcal{H}}=(V,E,E^{\mathcal{H}})$ with edges $E$, and hyperedges of size $3$, $E^\mathcal{H}$, together with similarity weights for edges, $\{w_{ij}\}_{(i,j)\in E}$, and similarity weights for 3-hyperedges,\footnote{We have 3 different weights corresponding to the 3 possible ways of partitioning $\{i,j,k\}$ in two parts: $w_{ij|k}$, $w_{jk|i}$ and $w_{ki|j}$. } $\{w_{ij|k}\}_{(i,j,k)\in E^{\mathcal{H}}}$. We now think of HC on the hypergraph $G^{\mathcal{H}}$, where for every binary tree $T$ we define the cost to be the natural extension of Dasgupta's objective: 
\begin{align}
\label{eq:HHC}
\sum_{(i,j)\in E}w_{ij}\lvert T_{ij}\rvert +\sum_{(i,j,k)\in E^{\mathcal{H}}}w^T_{ijk}\lvert T_{ijk}\rvert 
\end{align}
where $w^T_{ijk}$ is either equal to $w_{ij|k}, w_{jk|i}~\textrm{or}~w_{ki|j}$, and $T_{ijk}$ is either the subtree rooted at  LCA($i,j$),\footnote{LCA($i,j$) denotes the lowest common ancestor of $i,j \in T$.} LCA($i,k$) or LCA($k,j$), all depending on how $T$ cuts the 3-hyperedge $\{i,j,k\}$.
The goal is to find a hierarchical clustering of this hypergraph, so as to minimize the cost \hyperref[eq:HHC]{(\ref{eq:HHC})} of the tree.

\paragraph{Reduction from Regularization to 3HHC.} Given an instance of HC with constraints (with their costs of violations) and a parameter $\lambda$, we create a hypergraph $G^\mathcal{H}$ so that the total cost of any binary clustering tree in the 3HHC problem \hyperref[eq:HHC]{(\ref{eq:HHC})} corresponds to the regularized objective of the same tree as in \hyperref[obj2]{(\ref{obj2})}. $G^{\mathcal{H}}$ has exactly the same set of vertices, (normal) edges and (normal) edge weights as in the original instance of the HC problem. Moreover, for every constraint $ab|c$ (with cost $c_{ab|c}$) it has a hyperedge $\{a,b,c\}$, to which we assign three weights $w_{ab|c}=0,w_{ac|b}= w_{bc|a}=\lambda \cdot c_{ab|c}$. Therefore, we ensure that any divisive algorithm for the 3HHC problem avoids the cost $\lvert T_{abc}\rvert\cdot \lambda \cdot c_{ab|c}$ only if it chops $\{a,b,c\}$ into $\{a,b\}$ and $\{c\}$ at some level, which matches the regularized objective.

\paragraph{Reduction from 3HHC to Hypergraph Sparsest Cut.} 
A natural generalization of the sparsest cut problem for our hypergraphs, which we call \emph{Hyper Sparsest Cut (HSC)}, is the following problem:
\begin{align*}
 \argmin_{S\subseteq V} \left(\frac{w(S,\bar{S})+\sum_{(i,j,k)\in E^{\mathcal{H}}}{w^{S}_{ijk}}}{\lvert S\rvert \lvert \bar{S}\rvert} \right)~,
\end{align*}
where $w(S,\bar{S})$ is the weight of the cut $(S,\bar{S})$ and $w^{S}_{ijk}$ is either equal to $w_{ij|k},w_{jk|i}$ or $w_{ki|j}$, depending on how $(S,\bar{S})$ chops the hyperedge $\{i,j,k\}$. Now, similar to \cite{vaggosSODA,dasguptaSTOC}, we can recursively run a blackbox approximation algorithm for HSC to solve 3HHC. The main result of this section is the following technical proposition, whose proof is analogous to that of \hyperref[th:main]{Theorem~\ref{th:main}} (provided in the supplementary materials).

\begin{proposition}
\label{prop:HSC}
Given the hypergraph $G^{\mathcal{H}}$ with $k$ hyperedges, and given access to an algorithm which is $\alpha_n$-approximation for HSC,  the Recursive Hypergraph Sparsest Cut (\texttt{R-HSC}) algorithm achieves an
$O(k\alpha_n)$-approximation.
\end{proposition}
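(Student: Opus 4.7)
The plan is to follow the same three-step template as the proof of Theorem~\ref{th:main}: decompose $\opt$ level-by-level, rescale levels by a factor of $6k$, and then charge the cost of each cluster $A$ that \texttt{R-HSC} splits to $\opt$'s cost on a scaled partition of $A$. First I would prove the hypergraph analog of Fact~\ref{fact:dec}: for any binary tree $T$, the $3$HHC cost~\eqref{eq:HHC} equals $\sum_{t=0}^{n}\bigl(w(T(t))+w^{\mathcal{H}}(T(t))\bigr)$, where $w^{\mathcal{H}}(T(t))$ sums $w^{T}_{ijk}$ over hyperedges $\{i,j,k\}$ that are not entirely contained in a single cluster of $T(t)$. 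This identity holds because $|T_{ijk}|$ equals the number of levels $t$ at which the three endpoints of the hyperedge are not all in one cluster of $T(t)$. The analog of Fact~\ref{fact:scale} with $l=6k$ then follows by the identical change-of-variables argument.

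The core step, and the main obstacle, is the hypergraph analog of Lemma~\ref{lem:mainproof}. For a cluster $A$ of size $r\ge 6k$ that \texttt{R-HSC} splits into $(B_1,B_2)$ with $|B_1|=s\le r/2$, the goal is to show
\[
r\cdot\bigl(w(B_1,B_2)+w^{\mathcal{H}}(B_1,B_2)\bigr)\le 4\alpha_n\cdot s\cdot (w+w^{\mathcal{H}})\bigl(\opt(\lfloor r/l\rfloor)\cap A\bigr).
\]
As in Lemma~\ref{lem:mainproof}, I would start from the approximate HSC sparsity guarantee and compare against the candidate cuts $(A_i,A\setminus A_i)$ induced by the partition $\{A_i\}$ of $A$ coming from $\opt(\lfloor r/l\rfloor)$. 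The delicate point, which replaces the feasibility issue of Theorem~\ref{th:main}, is that $w^{\mathcal{H}}(A_i,A\setminus A_i)$ can involve a different weight than the one $\opt$ pays on a hyperedge $\{x,y,z\}\subset A$, whenever the candidate cut splits the hyperedge in a direction different from $\opt$'s first split.

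To overcome this, I would call an index $i$ \emph{good} if the cut $(A_i,A\setminus A_i)$ splits every hyperedge $\{x,y,z\}\subset A$ using a weight no larger than $w^{\opt}_{xyz}$. A hyperedge that $\opt$ first splits as $\{x\}$ vs.\ $\{y,z\}$ can disqualify at most the two indices whose small side is $\{y\}$ or $\{z\}$, so at most $2k_A$ indices are bad. Since every $|A_i|\le r/(6k)$, the averaging computation in Lemma~\ref{lem:mainproof} carries over verbatim and gives $\sum_{i\in F}|A_i||A\setminus A_i|\ge r^2/2$ over the good set $F$, while each hyperedge cut by $\opt(\lfloor r/l\rfloor)$ within $A$ contributes at most $2w^{\opt}_{xyz}$ to $\sum_{i\in F}w^{\mathcal{H}}(A_i,A\setminus A_i)$, exactly mirroring the edge double-counting bound. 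Plugging these into the sparsity inequality yields the per-cluster bound. Finally, combining across clusters by disjointness as in Lemma~\ref{lem:combine}, handling the small clusters ($|A|<6k$) by the trivial per-cut bound against $\opt$, and invoking the scaled-level fact closes the proof and produces the $O(k\alpha_n)$ approximation guarantee.
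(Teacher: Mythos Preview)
Your plan is sound and tracks the paper's own strategy, which, like yours, mirrors the proof of Theorem~\ref{th:main}. The per-cluster bound via exclusion of ``bad'' indices is a clean substitute for what the paper sketches as a ``merging procedure between sub-clusters of $A$ having active constraints between them'': rather than discarding the two pieces whose singleton-side cut would split a hyperedge against $\opt$, the paper merges the pieces containing the two vertices that $\opt$ does \emph{not} separate first, so that every surviving candidate cut pays exactly $w^{\opt}_{xyz}$ on that hyperedge. Both devices neutralize the same $\le 2k_A$ obstructions and yield the same per-cluster inequality up to constants; your route is arguably closer in spirit to Lemma~\ref{lem:mainproof}.

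Two points need attention. First, a minor arithmetic slip: a $3$-hyperedge can span up to three pieces $A_i$, so its contribution to $\sum_{i\in F}w^{\mathcal H}(A_i,A\setminus A_i)$ is bounded by $3\,w^{\opt}_{xyz}$, not $2\,w^{\opt}_{xyz}$. This only affects the absolute constant. Second, and more substantively, your closing remark that small clusters ($|A|<6k$) are dispatched by ``the trivial per-cut bound against $\opt$'' does not go through for the hyperedge part. In Lemma~\ref{lem:combine}(2) the bound $\sum_{|A|<6k} r_A\,w(B_1,B_2)\le 6k\sum_{(i,j)}w_{ij}\le 6k\cdot\opt$ works because $\opt\ge\sum_{(i,j)}w_{ij}$: every edge contributes its own weight to $\opt$. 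For a hyperedge, however, \texttt{R-HSC} may pay $w^{\texttt{R-HSC}}_{xyz}$ while $\opt$ pays $w^{\opt}_{xyz}$, and nothing in the crude bound controls the ratio $w^{\texttt{R-HSC}}_{xyz}/w^{\opt}_{xyz}$, which can be unbounded (e.g., in the reduction instance one of the three weights is $0$). So ``trivial per-cut bound'' hides a step that still requires the sparsity guarantee: in a small $A$ you must compare \texttt{R-HSC}'s cut against a concrete candidate that pays only $w^{\opt}$ on each active hyperedge (for instance the singleton cut isolating the vertex that $\opt$ separates first, or a merged candidate as in the paper's sketch). The paper's sketch is equally terse here, but you should not leave this to the reader.
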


\paragraph{Reduction from HSC back to Sparsest Cut.}




We now show how to get an $\alpha_n$-approximation oracle for our instance of the HSC problem by a general reduction to sparsest cut. Our reduction is simple: given a hypergraph $G^\mathcal{H}$ and all the weights, create an instance of  sparsest cut with the same vertices, (normal) edges and (normal) edge weights. Moreover, for every 3-hyperedge $\{a,b,c\}$ consider adding a \emph{triangle} to the graph, i.e. three weighted edges connecting $\{a,b,c\}$, where:
\begin{align*}
&w'_{ab}=\frac{w_{bc|a}+w_{ac|b}-w_{ab|c}}{2}=\lambda\cdot c_{ab|c}, \\
&w'_{ac}=\frac{w_{bc|a}+w_{ab|c}-w_{ac|b}}{2}=0,\\
&w'_{bc}=\frac{w_{ac|b}+w_{ab|c}-w_{bc|a}}{2}=0.
\end{align*}
This construction can be seen in \hyperref[fig:reduction]{Figure~\ref{fig:reduction}}. The important observation is that $w'_{ab}+w'_{ac}=w_{bc|a}$, $w'_{ab}+w'_{bc}=w_{ac|b}$ and $w'_{bc}+w'_{ac}=w_{ab|c}$, which are exactly the weights associated with the corresponding splits of the 3-hyperedge $\{a,b,c\}$. So, correctness of the reduction\footnote{Since all weights in the final graph are non-negative, standard techniques for Sparsest Cut can be used.} follows as the weight of each cut is preserved between the hypergraph and the graph after adding the triangles. For a discussion on extending this gadget more generally, see the supplement.

\begin{remark}
Reduction to hypergraphs: we would like to emphasize the necessity of the hypergraph version in order for the reduction to work. One might think that just adding extra heavy edges would be sufficient, but there is a technical difficulty with this approach. Consider a triplet constraint $ab|c$; once $c$ is separated from $a$ and $b$ at some level, there is no extra tendency anymore to keep $a$ and $b$ together (i.e. only the similarity weight should play role after this point). This behavior cannot be captured by only adding heavy-weight edges. Instead, one needs to add a heavy edge between $a$ and $b$ that disappears once $c$ is separated, and this is exactly why we need the hyperedge gadget. One can replace the reduction for a one-shot proof, but we believe it will be less modular and less transparent.
\end{remark}

\begin{figure}[ht]
\vskip -0.1in
\begin{center}
\centerline{\includegraphics[scale=0.9]{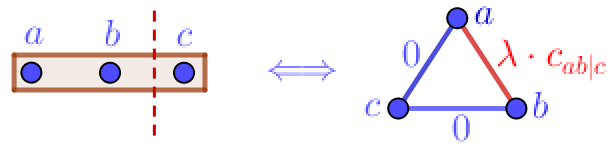}} 
\caption{Transforming a 3-hyperedge to a triangle.\label{fig:reduction}}
\end{center}
\vskip -0.3in
\end{figure}

\newcommand{\Ycal}{\mathcal{Y}}
\newcommand{\Ecal}{\mathcal{E}}
\newcommand{\Zcal}{\mathcal{Z}}
\newcommand{\OPT}{\texttt{OPT}}
\newcommand{\SDP}{\texttt{OPT-SDP}}
\newcommand{\Ws}{\sum_{(i,j)\in E}w_{ij}}
\newcommand{\OBJ}{\texttt{OBJ}}
\newcommand{\RRC}{\texttt{Recursive-Random-Cutting}~}
\newcommand{\cons}{\mathcal{C}}
\newcommand{\layer}{\mathcal{I}}
\newcommand{\dm}[1]{\textrm{DM}(#1)}
\newcommand{\dmc}[1]{\textrm{DMC}(#1)}
\newcommand{\CRRC}{\texttt{Constrained-RRC}~}

\section{Variations on a Theme}
\label{sec:random}
In this section we study dissimilarity-HC, and we look into the problem of designing approximation algorithms for both unconstrained and constrained hierarchical clustering. In \cite{vincentNIPS}, they show that average linkage is a $\frac{1}{2}$-approximation for this problem and they propose a top-down approach based on locally densest cut achieving a $(\frac{2}{3}-\epsilon)$-approximation in time $\tilde{O}\left(\frac{n^2(n+m)}{\epsilon}\right)$. Notably, when $\epsilon$ gets small the running time blows up. 

Here, we prove that the most natural randomized algorithm for this problem, i.e. recursive random cutting, is a $\frac{2}{3}$-approximation with expected running time $O(n\log n)$. We further derandomize this algorithm to get a simple deterministic local-search style $\frac{2}{3}$-approximation algorithm. 

If we also have structural constraints for the dissimilarity-HC, we show that the existing approaches fail. In fact we show that they lead to an $\Omega(n)$-approximation factor due to the lack of ``exploration'' (e.g. recursive densest cut). We then show that recursive random cutting is robust to adding user constraints, and indeed it preserves a constant approximation factor when there are, roughly speaking, constantly many user constraints. 

\paragraph{Randomized $\mathbf{\tfrac{2}{3}}$-approximation.} Consider the most natural randomized algorithm for hierarchical clustering, i.e. recursively partition each cluster into two, where each point in the current cluster independently flips an unbiased coin and based on the outcome, it is put in one of the two parts. 
\begin{theorem}
\label{thm:recursive-random}
\RRC  is a $\frac{2}{3}$-approximation for maximizing dissimilarity-HC objective.
\end{theorem}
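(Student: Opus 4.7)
The plan is to reduce the analysis to a single per-pair expectation. By linearity, $\Ex{\text{cost}(T)} = \sum_{(i,j) \in E} w_{ij}\, \Ex{\lvert T_{ij}\rvert}$, so it suffices to compute $\Ex{\lvert T_{ij}\rvert}$ for a fixed edge $(i,j)$. Symmetry of the unbiased coin-flipping procedure makes this expectation depend only on the size of the cluster currently containing $i$ and $j$, so I would define
\begin{equation*}
f(s) \;:=\; \Ex{\lvert T_{ij}\rvert \growingmid i,j \text{ currently in a cluster of size } s},
\end{equation*}
and derive a one-step recurrence for $f$.

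Conditioning on one random split of a cluster of size $s$ containing $i,j$: with probability $1/2$ the pair is separated, contributing $s$ to $\lvert T_{ij}\rvert$; with probability $1/2$ they stay together in a new cluster of random size $S'$ distributed as $2 + \mathrm{Bin}(s-2, \tfrac12)$. This yields
\begin{equation*}
f(s) \;=\; \tfrac{1}{2}\, s + \tfrac{1}{2}\, \Ex{f(S')}, \qquad f(2) = 2.
\end{equation*}
Using $\Ex{S'} = (s+2)/2$, I would try the ansatz $f(s) = \alpha(s+1)$ and match coefficients to get $\alpha = \tfrac{2}{3}$, obtaining $f(s) = \tfrac{2}{3}(s+1)$. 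A short induction on $s$ verifies this is the unique solution; the only mild subtlety is that $f(s)$ appears on the right-hand side through the ``all coins to one side'' event of probability $2^{1-s}$, but with coefficient strictly less than $1$, so each $f(s)$ is pinned down by lower-index values.

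Evaluating at the root gives $\Ex{\lvert T_{ij}\rvert} = f(n) = \tfrac{2}{3}(n+1)$ for every pair, hence $\Ex{\text{cost}(T)} = \tfrac{2}{3}(n+1)\cdot \sum_{(i,j)\in E} w_{ij}$. For the upper bound I would invoke the trivial $\lvert T^*_{ij}\rvert \leq n$ to conclude $\opt \leq n \cdot \sum_{(i,j)\in E} w_{ij}$. Combining: $\Ex{\text{cost}(T)} \geq \tfrac{n+1}{n}\cdot\tfrac{2}{3}\,\opt \geq \tfrac{2}{3}\,\opt$. The $O(n\log n)$ expected running time follows from the standard fact that the expected depth of a leaf in a random binary split tree is $O(\log n)$, combined with $O(\lvert C\rvert)$ work per visit to cluster $C$.

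The real content is the recurrence step, i.e., conjecturing and verifying $f(s) = \tfrac{2}{3}(s+1)$; without the right ansatz the binomial distribution of $S'$ makes the recursion look opaque. The handling of ``trivial splits'' is a non-issue for the cost analysis, since $\lvert T_{ij}\rvert$ records the cluster size at the moment of actual separation and is unaffected; it only contributes a bounded expected number of retries per node to the running-time bound. The trivial upper bound $\opt \leq n\sum w_{ij}$ is crude but already sharp enough (up to the $1/n$ slack in $(n+1)/n$) to push the ratio past $\tfrac{2}{3}$.
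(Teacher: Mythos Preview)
Your proof is correct but follows a genuinely different route from the paper. The paper decomposes $\lvert T_{ij}\rvert$ as a sum of indicators $\mathbf{1}\{k\in\textrm{leaves}(T_{ij})\}$ over all vertices $k$, and then argues directly (via the coin-flip sequences of $i,j,k$) that $\Pr{k\in T_{ij}}=\tfrac{2}{3}$ for each $k\neq i,j$; summing gives the same $\Ex{\lvert T_{ij}\rvert}=\tfrac{2}{3}(n+1)$ that your recurrence produces. Your approach instead conditions on the current cluster size, writes the one-step recurrence $f(s)=\tfrac{1}{2}s+\tfrac{1}{2}\Ex{f(S')}$, and solves it exactly with the linear ansatz. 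Both are clean; your recurrence is perhaps the more systematic derivation for this single statement, while the paper's per-triple viewpoint is the one they later reuse to handle the constrained version (\hyperref[thm:constrained-random]{Theorem~\ref{thm:constrained-random}}), where one must track whether a \emph{specific} third vertex $z$ survives alongside $x,y$ through a chain of constraint-resolving events. So the triple decomposition buys modularity for the constrained extension, whereas your recurrence would need to be reworked there.
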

\begin{proof}[Proof sketch.]
An alternative view of Dasgupta's objective is to divide the reward of the clustering tree between all possible triples $\{i,j,k\}$, where $(i,j)\in E$ and $k$ is another point (possibly equal to $i$ or $j$). Now, in any hierarchical clustering tree, if at the moment right before $i$ and $j$ become separated the vertex $k$ has still been in the same cluster as $\{i,j\}$, then this triple contributes $w_{ij}$ to the objective function. We claim this event happens with probability exactly ${\frac{2}{3}}$. To see this, consider an infinite independent sequence of coin flips for $i$, $j$, and $k$. Without loss of generality, condition on $i$'s sequence to be all heads. The aforementioned event happens only if $j$'s first tales in its sequence happens no later than $k$'s first tales in its sequence. This happens with probability $\sum_{i\geq 1} \frac{1}{2}(\frac{1}{4})^{i-1}=\frac{2}{3}$. Therefore, the algorithm gets the total reward $\frac{2n}{3}\sum_{(i,j)\in E}w_{ij}$ in expectation. Moreover, the total reward of any hierarchical clustering is upper-bounded by $n\sum_{(i,j)\in E}w_{ij}$, which completes the proof of the $\frac{2}{3}$-approximation. 
\end{proof}

\begin{remark} This algorithm runs in time $O(n\log n)$ in expectation, due to the fact that the binary clustering tree has expected depth $O(\log n)$ (see for example \cite{CLRS2009}) and at each level we only perform $n$ operations. 
\end{remark}
We now derandomize the recursive random cutting algorithm using the \emph{method of conditional expectations}. At every recursion, we go over the points in the current cluster one by one, and decide whether to put them in the ``left'' partition or ``right'' partition for the next recursion. Once we make a decision for a point, we fix that point and go to the next one. Roughly speaking, these local improvements can be done in polynomial time, which will result in a simple local-search style deterministic algorithm.
\begin{theorem}
\label{thm:derandom}
There is a deterministic local-search style $\frac{2}{3}$-approximation algorithm for maximizing dissimilarity-HC objective that runs in time $O(n^2(n+m))$.
\end{theorem}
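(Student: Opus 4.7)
The plan is to derandomize \RRC\ by the method of conditional expectations, applied one recursion level at a time from the top of the tree downward. Fix a cluster $A$ that is about to be split, and suppose each point of $A$ has been either committed to a left part $L$, committed to a right part $R$, or left to flip its own fair coin (the set $U$). Assuming that every deeper recursion is still fully random, I would derive a closed-form expression for the conditional expected reward $\phi(L,R,U)$ coming from $A$ and all of its descendants. Using the triple decomposition from the proof of \thmref{thm:recursive-random}, $\phi(L,R,U)=\sum_{(i,j)\in E,\,i,j\in A} w_{ij}\sum_{k\in A} q_{ijk}$, where $q_{ijk}$ is the probability (over the remaining randomness) that $k\in T_{ij}$. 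A short case analysis pivoting on where $i,j,k$ sit among $L,R,U$ shows $q_{ijk}$ takes only a handful of values: it equals $1$ when $i,j$ are on committed opposite sides; $2/3$ when $i,j,k$ all sit on the same committed side (exactly the probability computed in \thmref{thm:recursive-random}); $0$ when $i,j$ are committed together but $k$ is on the other side; $1/3$ when $i,j$ are committed together and $k\in U$; and simple averages of these when one or both of $i,j$ remain in $U$. This gives $\phi$ in closed form, depending only on the aggregates $|L|,|R|,|U|$ and the constantly many per-edge states.

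The derandomization is then standard. Starting with $L=R=\emptyset$ and $U=A$, I process the points of $A$ in arbitrary order. Because the coin of any $p\in U$ is independent and unbiased,
\begin{equation*}
\phi(L,R,U)=\tfrac12\phi(L\cup\{p\},R,U\setminus\{p\})+\tfrac12\phi(L,R\cup\{p\},U\setminus\{p\}),
\end{equation*}
so greedily committing $p$ to whichever side yields the larger value of $\phi$ weakly preserves $\phi$. After $|A|$ such commitments the split of $A$ is fully deterministic and the final value of $\phi$ is no smaller than its initial value. Recursing on the two resulting subclusters and iterating the same procedure at every level, an induction on the recursion tree shows that the final, entirely deterministic reward is at least $\phi(\emptyset,\emptyset,V)$ at the root, which by \thmref{thm:recursive-random} is at least $\tfrac23\OPT$. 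The algorithm is naturally called local-search style because each individual point commitment is a local move that weakly improves the current $\phi$.

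For the running time I would maintain, for each edge $(i,j)$ with both endpoints still in the current cluster, a constant-size record of the locations of $i$ and $j$ among $L,R,U$, together with the three aggregates $|L|,|R|,|U|$. Because the closed form for $q_{ijk}$ is linear in these aggregates once the edge's case is fixed, committing a new point $p$ only changes the per-edge states of the $O(\deg_A(p))$ edges incident to $p$, and the remaining effect on $\phi$ is an aggregate-level change that can be applied in one bulk pass of $O(|A|+m_A)$ work. Hence each derandomization step costs $O(|A|+m_A)$, each split of a cluster $A$ costs $O(|A|(|A|+m_A))$, and summing over the $n-1$ splits yields $\sum_{\mathrm{splits}}|A|^2=O(n^3)$ together with $\sum_{\mathrm{splits}}|A|\cdot m_A\le m\sum_{\mathrm{splits}}|A|=O(n^2 m)$, for a total of $O(n^2(n+m))$ as claimed.

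The main obstacle I foresee is precisely this accounting: a naive implementation that recomputes every edge's contribution to $\phi$ from scratch whenever $|L|,|R|,|U|$ change would inflate the per-step cost to $O(|A|\cdot m_A)$ and lose a factor of $n$. Separating the per-edge state (updated only when one of its endpoints moves out of $U$) from the global linear dependence on $|L|,|R|,|U|$ (updated in aggregate once per step) is what secures the advertised bound. Once this bookkeeping is in place, verifying correctness reduces to a mechanical check of the constant number of sub-cases of the closed form for $q_{ijk}$ (particularly the ``both $i,j\in U$'' case) plus the straightforward induction over the recursion tree.
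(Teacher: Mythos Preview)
Your proposal is correct and follows essentially the same approach as the paper: derandomize \RRC via the method of conditional expectations, committing one point at a time within each cluster and using the triple decomposition of \thmref{thm:recursive-random} to evaluate the conditional expected reward. Your write-up is in fact more detailed than the paper's---you spell out the case analysis for $q_{ijk}$ and the bookkeeping needed for the $O(n^2(n+m))$ bound, whereas the paper leaves these as ``a careful counting argument''---but the underlying argument is identical.
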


\paragraph{Maximizing the Objective with User Constraints} From a practical point of view, one can think of many settings in which the output of the hierarchical clustering algorithm should satisfy user-defined hard constraints. Now, combining the new perspective of maximizing Dasgupta's objective with this practical consideration  raises a natural question: which algorithms are robust to adding user constraints, in the sense that a simple variation of these algorithms still achieve a decent approximation factor? 

\paragraph{$\bullet$ Failure of ``Non-exploring'' Approaches.}

Surprisingly enough, there are convincing reasons that adapting existing algorithms for maximizing Dasgupta's objective (e.g. those proposed in ~\cite{vincentSODA}) to handle user constraints is either challenging or hopeless. First, bottom-up algorithms, e.g. average-linkage, fail to output a feasible outcome if they only consider each constraint separately and not all the constraints jointly (as we saw in \hyperref[fig:fail]{Figure~\ref{fig:fail}}). Second, maybe more surprisingly, the natural extension of (locally) \texttt{Recursive-Densest-Cut}\footnote{While a locally densest cut can be found in poly-time, desnest cut is NP-hard, making our negative result stronger.} algorithm proposed in~\cite{vincentSODA} to handle user constraints performs poorly in the worst-case, even when we have only one constraint. \texttt{Recursive-Densest-Cut} proceeds by repeatedly picking the cut that has maximum density, i.e. ${\argmax}_{S\subseteq V}\frac{w(S,\bar{S})}{\lvert S\rvert \cdot\lvert \bar{S}\rvert}$ and making two clusters. To handle the user constraints, we run it  recursively on the supergraph generated by the constraints, similar to the approach in \hyperref[sec:CSC]{Section~\ref{sec:CSC}}. Note that once the algorithm resolves a triplet constraint, it also breaks its corresponding supernode. 

Now consider the following example in \hyperref[fig:counter]{Figure~\ref{fig:counter}}, in which there is just one triplet constraint ab$|$c.  The weight $W$ should be thought of as large and $\epsilon$ as small. By choosing appropriate weights on the edges of the clique $K_n$, we can fool the algorithm into cutting the dense parts in the clique, without ever resolving the $ab|c$ constraint until it is too late. The algorithm gets a gain of $O(n^3+W)$ whereas \texttt{OPT} gets $\Omega(nW)$ by starting with the removal of the edge $(b,c)$ and then removing $(a,b)$, thus enjoying a gain of $\approx nW$.  
\begin{figure}[ht]
\vskip -0.1in
\label{fig:lowerbound}
\begin{center}
\centerline{\includegraphics[scale=0.6]{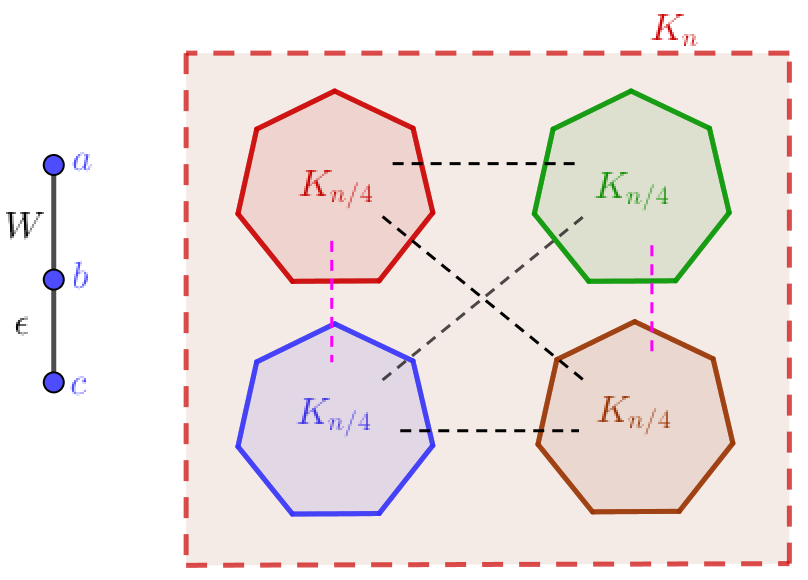}} \label{fig:counter}
\caption{$\Omega(n)$-approximation lower bound instance for the constrained \texttt{Recursive-Densest-Cut} algorithm.}
\end{center}
\vskip -0.4in
\end{figure}

\paragraph{$\bullet$ Constrained Recursive Random Cutting.} The  example in \hyperref[fig:lowerbound]{Figure~\ref{fig:lowerbound}}, although a bit pathological, suggests that a meaningful algorithm for this problem should explore cutting low-weight edges that might lead to resolving constraints, maybe randomly, with the hope of unlocking rewarding edges that were hidden before this exploration. 

Formally, our approach is showing that the natural extension of recursive random cutting for the constrained problem, i.e. by running it on the supergraph generated by constraints and unpacking supernodes as we resolve the constraints (in a similar fashion to \texttt{CSC}), achieves a constant factor approximation when the constraints have bounded \emph{dependency}. In the remaining of this section, we define an appropriate notion of dependency between the constraints, under the name of \emph{dependency measure} and analyze the approximation factor of constrained recursive random cutting (\CRRC) based on this notion.
 
Suppose we are given an instance of hierarchical clustering with triplet constraints $\{c_1,\ldots,c_k\}$, where  $c_i=x^{i}|y^{i}z^{i}, \forall i\in[k]$. For any triplet constraint $c_i$, lets call the pair $\{y^i,z^i\}$ the \emph{base}, and $z^i$ the \emph{key} of the constraint. We first partition our constraints into equivalence classes $\cons_1,\ldots,\cons_N$, where $\cons_i\subseteq \{c_1,\ldots,c_k\}$.  For every $i,j$, the constraints $c_i$ and $c_j$ belong to the same class $\mathcal{C}$ if they share the same base (see \hyperref[fig:class]{Figure~\ref{fig:class}}).
\begin{figure}[t]
        \centering
       \includegraphics[scale=0.6]{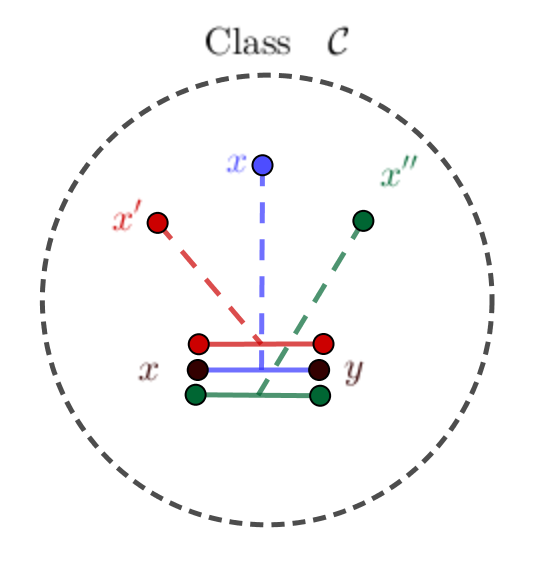}
        \caption{Description of a class $\cons$ with base $\{x,y\}$.\label{fig:class} }
         \vspace{-0.35cm}
\end{figure}
   
\begin{definition}[\textbf{Dependency digraph}] 
\label{def:depdigraph}
The Dependency digraph is a directed graph with vertex set $\{\cons_1,\ldots,\cons_L\}$. For every $i,j$, there is a directed edge $\cons_i\rightarrow \cons_j$ if $\exists~c=x|yz, c'=x'|y'z'$, such that $c\in \cons_i, c'\in\cons_j$, and either $\{x,z\}=\{y',z'\}$ or $\{x,y\}=\{y',z'\}$ (see \hyperref[fig:depdi]{Figure~\ref{fig:depdi}}).
\end{definition}
    
\begin{figure}
        \centering
       \includegraphics[width=0.8\textwidth]{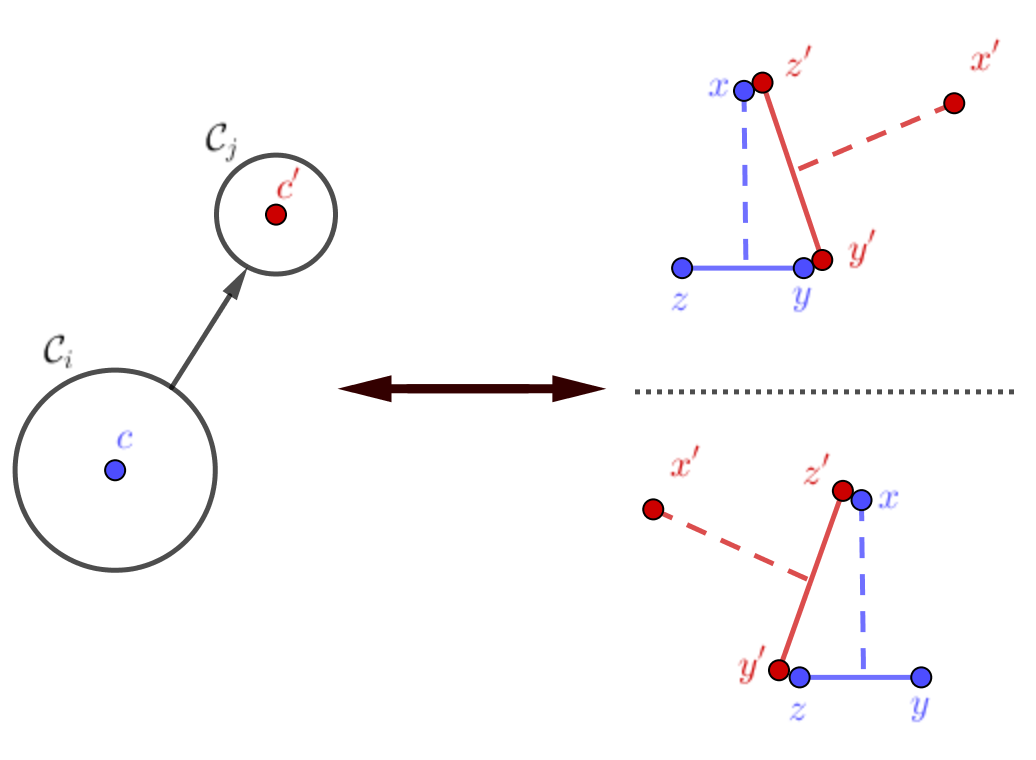}
        \caption{Classes $\{\cons_{i},\cons_j\}$, and two situations for having $\cons_i\rightarrow\cons_j$.\label{fig:depdi}}
         \vspace{-0.35cm}
\end{figure}

The dependency digraph captures how groups of constraints  impact each other. Formally, the existence of the edge $\cons_i\rightarrow \cons_j$ implies that all the constraints in $\cons_j$ should be resolved before one can separate the two endpoints of the (common) base edge of the constraints in $\cons_i$.
\begin{remark}
If the constraints $\{c_1,\ldots,c_k\}$ are feasible, i.e. there exists a hierarchical clustering that can respect all  the constraints, the dependency digraph is clearly acyclic.
\end{remark}
\begin{definition}[\textbf{Layered dependency subgraph}]
\label{def:depsub}
Given any class $\cons$, the \emph{layered dependency subgraph of $\cons$} is the induced subgraph in the dependency digraph by all the classes that are reachable from $\cons$. Moreover, the vertex set of this subgraph can be partitioned into layers $\{\layer_0,\layer_1,\ldots,\layer_L\}$, where $L$ is the maximum length of any directed path leaving $\cons$ and $\layer_l$ is a subset of classes where the length of the longest path from $\cons$ to each of them is exactly equal to $l$ (see \hyperref[fig:layers]{Figure~\ref{fig:layers}}).
\end{definition}

\begin{figure}[thb]
    \centering
       \includegraphics[width=0.8\textwidth]{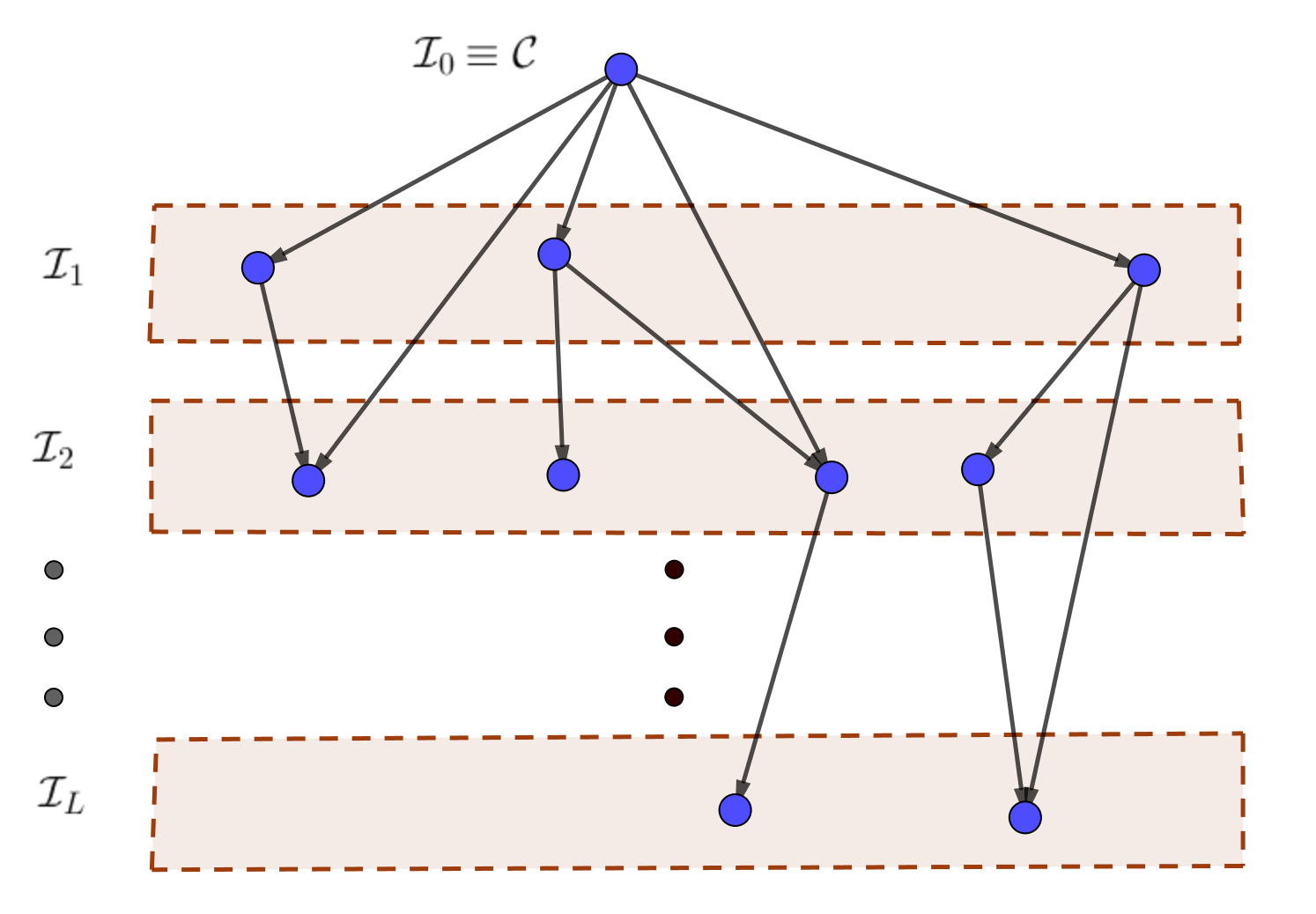}
   \caption{Layered dependency subgraph of class $\cons$.\label{fig:layers}}
\end{figure}

We are now ready to define a crisp quantity for every dependency graph. This will later help us give a more meaningful and refined \textit{beyond-worst-case} guarantee for the approximation factor of the \CRRC algorithm.
\begin{definition}[\textbf{Dependency measure}]
\label{def:dep-mes}
Given any class $\cons$, the \emph{dependency measure of $\cons$} is defined as
$$
\dm{\cons}\triangleq\prod_{l=0}^L(1+\sum_{\cons'\in \layer_l} \lvert\cons'\rvert),
$$
where $\layer_0,\ldots,\layer_L$ are the layers of the dependency subgraph of $\cons$, as in \hyperref[def:depsub]{Definition~\ref{def:depsub}}. Moreover, the dependency measure of a set of constraints $\dmc{\{c_1,\ldots,c_k\}}$ is defined as $\max_{\cons}\dm{\cons}$, where the maximum is taken over all the classes generated by $\{c_1,\ldots,c_k\}$.
\end{definition}
Intuitively speaking, the notion of the dependency measure quantitatively expresses how ``deeply'' the base of a constraint is \emph{protected} by the other constraints, i.e. how many constraints need to be resolved first before the base of a particular constraint is unpacked and the \CRRC algorithm can enjoy its weight. This intuition is formalized through the following theorem, whose proof is deferred to the supplementary materials. 
 \begin{theorem} 
\label{thm:constrained-random} 
 The constrained recursive random cutting (\CRRC) algorithm is an $\alpha$-approximation algorithm for maximizing dissimilarity-HC objective objective given a set of feasible constraints $\{c_1,\ldots,c_k\}$, where 
 $$\alpha=\frac{2(1-k/n)}{3\cdot\dmc{\{c_1,\ldots,c_k\}}}\le\frac{2(1-k/n)}{3\cdot\max_{\cons}\dm{\cons}}$$
 \end{theorem}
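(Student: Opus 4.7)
The plan is to generalize the triple-decomposition identity from the proof sketch of \thmref{thm:recursive-random}. Write the objective as
\begin{align*}
\sum_{(i,j)\in E} w_{ij}\lvert T_{ij}\rvert &= 2\sum_{(i,j)\in E} w_{ij} + \sum_{(i,j)\in E}\sum_{v\ne i,j} w_{ij}\cdot \mathbf{1}[v\in T_{ij}],
\end{align*}
so that the expected value produced by \CRRC equals $2\sum w_{ij}+\sum_{(i,j),v}w_{ij}\Pr[v\in T_{ij}]$. Since the universal upper bound $\OPT\le n\sum w_{ij}$ still holds, the approximation factor is controlled by a uniform lower bound on the per-triple probabilities $\Pr[v\in T_{ij}]$.

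Fix an edge $(i,j)\in E$, let $\cons$ denote the class whose base is $\{i,j\}$ (or a trivial class if no constraint has this base), and let $\{\layer_0,\layer_1,\ldots,\layer_L\}$ be the layers of the dependency subgraph of $\cons$, with $\layer_0=\{\cons\}$. The main technical claim I would prove is that for every $v\ne i,j$ that is \emph{not} a key of any constraint in $\cons$,
\[
\Pr[v\in T_{ij}] \;\ge\; \frac{2/3}{\dm{\cons}} \;=\; \frac{2/3}{\prod_{\ell=0}^L\bigl(1+\sum_{\cons'\in\layer_\ell}\lvert\cons'\rvert\bigr)}.
\]
If $v$ \emph{is} such a key, then the triplet $v|ij$ forces $v\notin T_{ij}$ always, so this term is $0$; since at most $\lvert\cons\rvert\le k$ vertices per edge fall into this bad category, summing over $v\in V\setminus\{i,j\}$ and normalizing by $n\sum w_{ij}$ produces the $(1-k/n)$ prefactor in the final bound.

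To establish the main claim I would mimic the three-coin race argument from \thmref{thm:recursive-random} and induct on the depth $L$ of the dependency subgraph. In the base case $L=0$, only the $\lvert\cons\rvert$ singleton keys of $\cons$ directly guard the supernode $\{i,j\}$; at each level, the supernode, its keys, and $v$ act as $\lvert\cons\rvert+2$ independent coin-flippers. A short enumeration shows that conditional on the first nontrivial split, the ``good'' event — a single key leaves the side of the split containing $\{i,j\}$, unpacking the supernode while $v$ stays with the base — contributes a factor of $1/(1+\lvert\cons\rvert)$, after which the cluster containing $\{i,j\}$ and $v$ proceeds in the unconstrained regime and \thmref{thm:recursive-random} delivers the remaining $2/3$. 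For $L\ge 1$, each key of $\cons$ is itself the base of a class in $\layer_1$, whose supernode cannot participate in any split until its own constraints are resolved; conditioning on the layer-$1$ supernodes unpacking in the direction that preserves the cluster containing $\{i,j\}$ and $v$ yields an additional factor of $1/(1+\sum_{\cons'\in\layer_1}\lvert\cons'\rvert)$, and iterating layer-by-layer through $\layer_2,\ldots,\layer_L$ telescopes to $1/\dm{\cons}$.

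Putting everything together, the expected \CRRC value is at least $2\sum w_{ij}+\tfrac{2(n-k-2)}{3\,\dmc{\cdot}}\sum w_{ij}$, and dividing by $n\sum w_{ij}$ yields the claimed approximation factor up to lower-order terms. The main obstacle I foresee is the inductive composition step across layers: when a deeper supernode unpacks and the recursion descends into the two resulting subclusters, one must carefully verify that the fresh coin flips in the subcluster containing $\{i,j\}$ and $v$ remain statistically independent of the past history of unpacking events, so that the per-layer multiplicative factors compose cleanly. A careful bookkeeping argument — tracking which entities inhabit which subcluster at each level and exploiting the memorylessness of the underlying geometric-style coin-flip race — should close this gap.
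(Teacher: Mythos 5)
Your proposal is essentially the paper's own argument: the same per-triple decomposition with the $n\sum_{(i,j)\in E}w_{ij}$ upper bound on \opt, and the same layer-by-layer coin-race conditioning yielding a factor of $1/(1+\sum_{\cons'\in\layer_\ell}\lvert\cons'\rvert)$ per layer times a final $2/3$, with independence across layers secured exactly as you anticipate (no edges between classes within a layer, and distinct classes have distinct keys). The one inaccuracy is your intermediate claim that $\Pr{v\in T_{ij}}\ge \tfrac{2}{3\,\dm{\cons}}$ for \emph{every} $v$ that is not a key of a constraint in $\cons$: a key $v$ of a class in $\layer_1$, whose base is $\{x,i\}$ for some key $x$ of $\cons$, must leave $i$ even before $x$ does and hence also has $\Pr{v\in T_{ij}}=0$; the paper sidesteps this by counting only the at least $n-k$ points $z$ involved in no constraint at all, a fix that preserves your $(1-k/n)$ prefactor and leaves the final bound unchanged.
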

 
 \begin{corollary}
\CRRC is an $O(1)$-approximation for maximizing dissimilarity-HC objective, given feasible constraints of constant dependency measure.
 \end{corollary}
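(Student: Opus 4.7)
My plan is to extend the triple-decomposition argument from the proof sketch of \thmref{thm:recursive-random} to the supergraph setting used by \CRRC. Writing the expected reward as $\sum_{(i,j)\in E}w_{ij}\sum_{v\in V}\Pr{v\in T_{ij}}$ and comparing against the trivial upper bound $\opt\le n\sum_{(i,j)\in E}w_{ij}$ (which holds since $|T_{ij}|\le n$) reduces the problem to lower-bounding $\Pr{v\in T_{ij}}$ over all triples $(i,j,v)$ with $(i,j)\in E$.

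I would first handle the ``clean'' case in which $i$, $j$, and $v$ lie in three distinct supernodes at the top of the recursion and none of them appears in any constraint. Here the supernode machinery is transparent: the coins of $i$, $j$, and $v$ are mutually independent and independent of the coins of the other supernodes, so the same conditioning argument as in \thmref{thm:recursive-random} (fix $v$'s coin sequence and ask whether $v$ first disagrees with $\{i,j\}$ before $i$ first disagrees with $j$) gives $\Pr{v\in T_{ij}}\ge\tfrac{2}{3}$. Since the $k$ constraints touch at most $3k$ vertices, the number of clean triples is at least $(n-3k)\cdot|E|$, which (after absorbing small constants) is the origin of the $(1-k/n)$ factor in $\alpha$.

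The heart of the proof is the case where the edge $(i,j)$ is the base of a constraint $c\in\cons$: now $i$ and $j$ are trapped inside one supernode that cannot be unpacked until the entire chain of constraints reachable from $\cons$ in the dependency digraph is resolved. I would prove by induction on the depth $L$ of the layered dependency subgraph of $\cons$ that $\Pr{v\in T_{ij}}\ge 2/(3\cdot\dm{\cons})$. The intuition is that at each layer $\layer_l$ there are $1+\sum_{\cons'\in\layer_l}|\cons'|$ ``progress events'' that must occur before the base of $\cons$ can be split: one coin flip to separate each of the $|\cons'|$ keys from its respective base, plus the eventual split of $\cons$'s own base. For $v$ to still be in the same cluster when $i,j$ finally separate, $v$'s supernode must agree with the base's supernode at every such event, and the random-walk calculation from the unconstrained proof shows each event contributes a factor of $1/(1+\sum_{\cons'\in\layer_l}|\cons'|)$; multiplied across the $L$ layers this is exactly $1/\dm{\cons}$.

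The main obstacle will be formalizing this per-layer factorization: coin flips across different recursive depths of \CRRC are not independent, and the set of active supernodes and active constraints changes as the algorithm proceeds. I plan to tame this by processing the layered dependency subgraph bottom-up, showing that, conditional on $v$'s supernode and the supernode containing the base of $\cons$ staying on the same side at every intermediate split, the progress events at deeper layers are conditionally independent of those at shallower layers, which is what permits the per-layer factors to multiply. Once the per-triple bound $2/(3\cdot\dm{\cons})$ is established, combining it with the clean-case bound $2/3$ and summing over triples and edges against $\opt\le n\sum_{(i,j)}w_{ij}$ yields $\alpha=2(1-k/n)/(3\cdot\dmc{\{c_1,\ldots,c_k\}})$.
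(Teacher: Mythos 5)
Your proposal is correct and takes essentially the same route as the paper: the corollary is an immediate consequence of Theorem~\ref{thm:constrained-random}, and your re-derivation of that theorem---the triple decomposition against the bound $n\sum_{(i,j)\in E}w_{ij}$, the $2/3$ bound for unconstrained triples, and the per-layer factorization over the layered dependency subgraph contributing a factor $\left(1+\sum_{\cons'\in\layer_l}\lvert\cons'\rvert\right)^{-1}$ per layer for a total of $1/\dm{\cons}$---mirrors the paper's supplementary proof almost step for step, including your identification of cross-layer independence as the point needing care (the paper resolves it via Lemma~\ref{lem:ind} plus the fact that distinct classes have distinct keys). The only cosmetic deviations are your $(n-3k)$ count of clean vertices versus the paper's $(n-k)$, and that your case split should explicitly note that edges $(i,j)$ which are not bases but have an endpoint appearing in some constraint still receive the $2/3$ bound, since the supernodes containing $i$, $j$, and $v$ remain distinct throughout the recursion.
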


\section{Conclusion}
We studied the problem of hierarchical clustering when we have structural constraints on the feasible hierarchies. We followed the optimization viewpoint that was recently developed in~\cite{dasguptaSTOC,vincentSODA} and we analyzed two natural top-down algorithms giving provable approximation guarantees. In the case where the constraints are infeasible, we proposed and analyzed a regularized version of the HC objective by using the hypergraph version of the sparsest cut problem. Finally, we also explored a variation of Dasgupta's objective and improved upon previous techniques, both in the unconstrained and in the constrained setting.


\section*{Acknowledgements}
Vaggos Chatziafratis was partially supported by ONR grant N00014-17-1-2562. Rad Niazadeh was supported by Stanford Motwani fellowship. Moses Charikar was supported by NSF grant CCF-1617577 and a Simons Investigator Award. We would also like to thank Leo Keselman, Aditi Raghunathan and Yang Yuan for providing comments on an earlier draft of the paper. We also thank the anonymous reviewers for their helpful comments and suggestions.

\bibliographystyle{plainnat}
\bibliography{refs}
\appendix

\section{Supplementary Materials}
\subsection{Missing proofs and discussion in Section~\ref{sec:CSC}}
\begin{proof}[Proof of Proposition~\ref{prop:convert}]
For nodes $u,v \in T$, let $P(u)$ denote the parent of $u$ in the tree and $\textrm{LCA}(u,v$) denote the lowest common ancestor of $u,v$. For a leaf node $l_i, i\in [k]$, we say that its label is $l_i$, whereas for an internal node of $T$, we say that its label is the label of any of its two children. As long as there are any two nodes $a,b$ that are siblings (i.e. $P(a)\equiv P(b)$), we create a constraint $ab|c$ where $c$ is the label of the second child of $P(P(a))$. We delete leaves $a,b$ from the tree and repeat until there are fewer than $3$ leaves left. To see why the above procedure will only create at most $k$ constraints, notice that every time a new constraint is created, we delete two nodes of the given tree $T$. Since $T$ has $k$ leaves and is binary, it can have at most $2k-1$ nodes in total. It follows that we create at most $\tfrac{2k-1}{2}< k$ triplet constraints. For the equivalence between the constraints imposed by $T$ and the created triplet constraints, observe that \textit{all} triplet constraints we create are explicitly imposed by the given tree (since we only create constraints for two leaves that are siblings) and that for any three datapoints $a,b,c \in T$ with LCA($a,c$)=LCA($b,c$), our set of triplet constraints will indeed imply $ab|c$, because LCA($a,b$) appears further down the tree than LCA($a,c$) and hence $a,b$ become siblings before $a,c$ or $b,c$.
\end{proof}

\begin{proof}[Proof of Fact~\ref{fact:dec} from \cite{vaggosSODA}]
We will measure the contribution of an edge $e=(u, v) \in E$ to the RHS and to the LHS. Suppose that $r$ denotes the size of the \textit{minimal} cluster in $\opt$ that contains both $u$ and $v$. Then the contribution of the edge $e=(u, v)$ to the LHS is by definition $r\cdot w_{e}$. On the other hand, $(u, v)\in  \opt(t), \forall t \in \{0, . . . , r-1\}$. Hence the contribution to the RHS is also $r\cdot w_{e}$.
\end{proof}

\begin{proof}[Proof of Fact~\ref{fact:scale} from \cite{vaggosSODA}]
We rewrite $\opt$ using the fact that $$w(\opt(t))\ge0$$ at every level $t\in[n]$: 
\begin{align*}
6k\cdot \opt&=6k\sum_{t=0}^n w(\opt(t))\\
&=6k(w(\opt(0))+\dots+w(\opt(n)))\\
&\ge 6k(w(\opt(0))+\dots+ w(\opt(\lfloor\tfrac{n}{6k}\rfloor)))\\
&= \sum_{t=0}^nw({\opt(\lfloor\tfrac{t}{6k}\rfloor))}
\end{align*}
\end{proof}

\begin{proof}[Proof of Lemma~\ref{lem:combine}]
By using the previous lemma we have:
\begin{align*}
\csc=\sum_{A} r_A w(B_1,B_2)\le 
\le O(\alpha_n)\sum_A s_A w(\opt(\lfloor\tfrac{r_A}{6k_A}\rfloor)\cap A)
\end{align*}
Observe that $w(\opt(t))$ is a decreasing function of $t$, since as $t$ decreases, more and more edges are getting cut. Hence we can write:
\begin{align*}
\sum_A s_A\cdot w(\opt(\lfloor\tfrac{r_A}{6k}\rfloor)\cap A)\le \sum_A\sum_{t=r_A-s_A+1}^{r_A}w(\opt(\lfloor\tfrac{r_A}{6k_A}\rfloor)\cap A)
\end{align*}
To conclude with the proof of the first part all that remains to be shown is that:
\[
\sum_A\sum_{t=r_A-s_A+1}^{r_A}w(\opt(\lfloor\tfrac{t}{6k_A}\rfloor)\cap A)\le \sum_{t=0}^nw(\opt(\lfloor\tfrac{t}{6k}\rfloor))
\]
To see why this is true consider the clusters $A$ with a contribution to the LHS. We have that $r_A-s_A + 1 \le t \le r_A$, hence $|B_2| < t$ meaning that $A$ is
a \textit{minimal} cluster of size $|A| \ge t > |B_2| \ge |B_1|$, i.e. if both $A$'s children are of size less than $t$,
then this cluster $A$ contributes such a term. The set of all such $A$ form a disjoint partition of $V$
because of the definition for minimality (in order for them to overlap in the hierarchical clustering,
one of them needs to be ancestor of the other and this cannot happen because of minimality).
Since $\opt(\lfloor\tfrac{t}{6k}\rfloor)\cap A$ for all such $A$ forms a disjoint partition of $\opt(\lfloor\tfrac{t}{6k}\rfloor)$, the claim follows
by summing up over all $t$.

Note that so far our analysis handles clusters $A$ with size $r_A\ge 6k$. However, for clusters with smaller size $r_A<6k$ we can get away by using a crude bound for bounding the total cost and still not affecting the approximation guarantee that will be dominated by $O(k\alpha_n)$:
\[
\sum_{|A|<6k}r_Aw(B_1,B_2)< 6k\cdot \sum_{ij\in E}w_{ij}=6k\cdot \opt(1)\le 6k\cdot \opt\qedhere
\]
\end{proof}
\begin{theorem}[The divisive algorithm using balanced cut] 
Given a weighted graph $G(V,E,w)$ with $k$ triplet constraints $ab|c$ for $a,b,c\in V$, the constrained recursive balanced cut algorithm (same as $\csc$, but using balanced cut instead of sparsest cut) outputs a HC respecting \textit{all} triplet constraints and achieves an $O(k\alpha_n)$-approximation for the HC objective \hyperref[obj]{(\ref{obj})}.
\end{theorem}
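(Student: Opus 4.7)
The plan is to mimic the proof of \thmref{th:main} verbatim wherever possible, adapting only the single step that invokes the sparsest-cut oracle. \hyperref[fact:dec]{Fact~\ref{fact:dec}} and \hyperref[fact:scale]{Fact~\ref{fact:scale}} are statements about $\opt$ only, so they apply unchanged. \hyperref[lem:combine]{Lemma~\ref{lem:combine}} (combining per-cluster costs by telescoping over levels) only uses a per-cluster inequality of the shape
\begin{equation*}
r_A \cdot w(B_1,B_2)\;\le\; O(\alpha_n)\cdot s_A\cdot w\!\left(\opt\!\left(\left\lfloor \tfrac{r_A}{6k_A}\right\rfloor\right)\cap A\right)
\end{equation*}
plus the disjointness of the \emph{minimal} \Bcsc-clusters of size $\ge t$ at each level $t$; both ingredients are oracle-agnostic. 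So the whole burden is to prove the per-cluster inequality (the analogue of \hyperref[lem:mainproof]{Lemma~\ref{lem:mainproof}}) when the cut is produced by a balanced-cut oracle rather than by a sparsest-cut oracle.

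This is where the main obstacle lies: a balanced-cut oracle is only benchmarked against other \emph{balanced} feasible cuts of $A$, so I cannot average over the $m$ cuts $\{(A_i,A\setminus A_i)\}$ as in the \csc\ proof, since most of them are highly unbalanced. Instead I will exhibit a single balanced, constraint-respecting cut of $A$ of small weight. Let $\{A_i\}_{i\in[m]}$ be the partition of $A$ induced by $\opt(\lfloor r/(6k_A)\rfloor)$, so each $|A_i|\le r/(6k_A)$. Build an auxiliary graph $H$ on these pieces by placing an edge between $A_i$ and $A_j$ whenever some active constraint $ab|c$ in $A$ has $a\in A_i,b\in A_j$. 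Any constraint-respecting cut of $A$ must keep every connected component of $H$ together; since $H$ has at most $k_A$ edges, each connected component spans at most $k_A+1$ pieces, i.e.\ at most $(k_A+1)\cdot r/(6k_A)\le r/3$ original points. The constant $6$ in the scaling level was chosen precisely so that this inequality holds.

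Because no component of $H$ exceeds $r/3$ in size, a straightforward greedy (add components to the current left side until its size first reaches $r/3$) produces a partition $(S,\bar S)$ of $A$ with $|S|,|\bar S|\ge r/3$ that respects all active constraints of $A$. Every edge cut by $(S,\bar S)$ has its two endpoints in distinct $A_i\ne A_j$ and is therefore cut by $\opt(\lfloor r/(6k_A)\rfloor)$ as well, so $w(S,\bar S)\le w(\opt(\lfloor r/(6k_A)\rfloor)\cap A)$. Feeding this into the $\alpha_n$-approximate balanced-cut oracle yields $w(B_1,B_2)\le \alpha_n\cdot w(S,\bar S)$, and balancedness of $(B_1,B_2)$ gives $r\le 3s$; multiplying these two estimates delivers the per-cluster inequality above, up to constants.

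Clusters with $|A|<6k$ are dispatched by the crude bound in part (2) of \hyperref[lem:combine]{Lemma~\ref{lem:combine}} unchanged, and the combining step then yields $\Bcsc\le O(k\alpha_n)\cdot\opt$ through \hyperref[fact:scale]{Fact~\ref{fact:scale}}. The almost-linear running time is inherited from the fast (flow-based) balanced-separator routines and from the fact that enforcing balance at every level produces a hierarchy of depth $O(\log n)$, so the recursion cost telescopes to near-linear total work.
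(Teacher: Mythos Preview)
Your proposal is correct and follows essentially the same route as the paper's proof. The paper's argument is terser: it simply asserts that the partition $\{A_i\}$ induced on $A$ by $\opt(\lfloor r/(6k_A)\rfloor)$ ``can be separated into two groups $(C_1,C_2)$ with $r/3\le|C_1|,|C_2|\le 2r/3$'' and then invokes the $\alpha_n$-approximate balanced-cut oracle on this witness, exactly as you do. Your construction of the auxiliary graph $H$ on the pieces (with edges coming from the base pairs $\{a,b\}$ of active constraints) and the component-size bound $(k_A+1)\cdot r/(6k_A)\le r/3$ is precisely the missing justification for why this balanced grouping can be made constraint-respecting; the paper leaves this step implicit. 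The remainder---the per-cluster inequality, the telescoping via \hyperref[lem:combine]{Lemma~\ref{lem:combine}}, the crude bound for small clusters, and the $O(\log n)$-depth recursion giving near-linear time---matches the paper line for line.
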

\begin{proof}
It is not hard to show that one can use access to balanced cut rather than sparsest cut and achieve the same approximation factor by the recursive balanced cut algorithm. 

We will follow the same notation as in the sparsest cut analysis and we will use some of the facts and inequalities we previously
proved about \texttt{OPT(t)}. Again, for a cluster $A$ of size $r$, the important observation is that the partition $A_1,\dots, A_{l}$ (at the end, we will again choose $l=6k_A$) induced inside the cluster $A$ by \texttt{OPT}$(\tfrac{r}{l})$ can be separated into two groups, let’s
say $(C_1, C_2)$ such that $r/3\le |C_1|, |C_2| \le 2r/3$. In other words we can demonstrate a Balanced
Cut with ratio $\tfrac13:\tfrac23$ for the cluster $A$. Since we cut fewer edges when creating $C_1, C_2$ compared to the partitioning of \texttt{OPT}$(\tfrac{r}{l})$:
\[
w(C_1,C_2)\le w(\opt(\lfloor\tfrac{r}{l}\rfloor)\cap A)
\]

By the fact we used an $\alpha_n$-approximation to balanced cut we can get the following inequality (similarly to \hyperref[lem:mainproof]{Lemma~\ref{lem:mainproof}}):
\[ 
r\cdot w(C_1,C_2)\le O(\alpha_n) \cdot s\cdot w(\opt(\lfloor\tfrac{r}{l}\rfloor)\cap A)
\]
Finally, we have to sum up over all the clusters $A$ (now in the summation we should write $r_A, s_A$ instead
of just $r, s$, since there is dependence in $A$) produced by the constrained recursive balanced cut algorithm for Hierarchical
Clustering and we get that we can approximate the
HC objective function up to $O(k\alpha_n)$.
\end{proof}

\begin{remark}Using balanced-cut can be useful for two reasons.
First, the runtime of sparsest and balanced cut on a graph with $n$ nodes and $m$ edges
are $\tilde{O}(m + n^{1+\epsilon})$. When run recursively however as in our case, taking recursive
sparsest cuts might be worse off by a factor of $n$ (in case of unbalanced splits at every step) in the
worst case. However, recursive balanced cut is still $\tilde{O}(m + n^{1+\epsilon})$. Second, it is known that an $\alpha$-approximation for the sparsest cut yields
an $O(\alpha)$-approximation for balanced cut, but not the
other way. This gives more flexibility to the balanced cut algorithm, and there is a chance it can achieve a better approximation factor (although we don't study it further in this paper).
\end{remark}

\subsection{Missing proofs in Section~\ref{sec:HRSC}}
\begin{proof}[Proof sketch of Proposition~\ref{prop:HSC}]
Here the main obstacle is similar to the one we handled when proving Theorem~(\ref{th:main}): for a given cluster $A$ created by the \texttt{R-HSC} algorithm, different constraints are, in general, active compared to the \opt\  decomposition for this cluster $A$. Note of course, that \opt\ itself will not respect all constraints, but because we don't know which constraints are active for \opt, we still need to use a charging argument to low levels of \opt. Observe that here we are allowed to cut an edge $ab$ even if we had the $ab|c$ constraint (incurring the corresponding cost $c_{ab|c}$), however we cannot possibly hope to charge this to the \opt\ solution, as \opt, for all we know, may have respected this constraint. In the analysis, we crucially use a merging procedure between sub-clusters of $A$ having active constraints between them and this allows us to compare the cost of our \texttt{R-HSC} with the cost of \opt\ .
\end{proof}

\begin{proof} [3-hyperedges to triangles for general weights ] Even though the general reduction presented in Section~\ref{sec:HRSC} (Figure~\ref{fig:reduction}) to transform a 3-hyperedge to a triangle is valid even for general instances of HSC with 3-hyperedges and arbitrary weights, the reduced sparsest cut problem may have negative weights, e.g. when $w_{bc|a}+w_{ac|b}<w_{ab|c}$. To the best of our knowledge, sparsest cut with negative weights has not been studied. Notice however that if the original weights $w_{bc|a},w_{ac|b},w_{ab|c}$ satisfy the triangle inequality (or as a special case, if two of them are zero which is usually the case when we have a triplet constraints), then we can actually solve (approximately) the HSC instance, as the sparsest cut instance will only have non-negative weights.
\end{proof}

\subsection{Missing proofs in Section~\ref{sec:random}}
\begin{proof}[Proof of Theorem~\ref{thm:recursive-random}]
We start by looking at the objective value of any algorithm as the summation of contributions of different triples $i,j$ and $k$ to the objective, where $(i,j)\in E$ and $k$ is some other point (possibly equal to $i$ or $j$). 
\begin{align*}
\OBJ=\sum_{(i,j)\in E}w_{ij}\lvert{T_{ij}}\rvert=\sum_{(i,j)\in E,k\in V}w_{ij}\mathbf{1}\{\textrm{$k\in \textrm{leaves}(T_{ij})$}\}
= \sum_{(i,j)\in E}\sum_{k\in V} \Ycal_{i,j,k},
\end{align*}
where random variable $\Ycal_{i,j,k}$ denotes the contribution of the edge $(i,j)$ and vertex $k$ to the objective value. The vertex $k$ is a leaf of $T_{ij}$ if and only if right before the time that $i$ and $j$ gets separated $k$ is still in the same cluster as $i$ and $j$. 
Therefore, 
$$\Ycal_{i,j,k}=w_{ij}\mathbf{1}\{\textrm{$i$ separates from $k$ no earlier than $j$ }\}$$

We now show that $\Ex{\Ycal_{i,j,k}}=\frac{2}{3}w_{ij}$. Given this, the expected objective value of  recursive random cutting algorithm will be at least $\frac{2n}{3}\sum_{(i,j)\in E}w_{ij}$. Moreover, the objective value of  the optimal hierarchical clustering, i.e. maximizer of the Dasgupta's objective, is no more than $n\sum_{(i,j)\in E}w_{ij}$, and we conclude that  recursive random cutting is a $\frac{2}{3}$-approximation. To see why  $\Ex{\Ycal_{i,j,k}}=\frac{2}{3}w_{ij}$,  think of randomized cutting as flipping an independent unbiased coin for each vertex, and then deciding on which side of the cut this vertex belongs to based on the outcome of its coin. Look at the sequence of the coin flips of $i$, $j$ and $k$. Our goal is to find the probability of the event that for the first time that $i$ and $j$ sequences are not matched, still $i$'s sequence and $k$'s sequence are matched up to this point, or still $j$'s sequence and $k$'s sequence are matched up to this. The probability of each of these events is equal to $\frac{1}{3}$. To see this for the first event, suppose $i$'s sequence is all heads ($H$). We then need the pair of coin flips of $(j,k)$ to be a sequence of $(H,H)$'s ending with a $(T,H)$, and this happens with probability $\sum_{i\geq 1} (\frac{1}{4})^i=\frac{1}{3}$. The probability of the second event is similarly calculated. Now, these events are disjoint. Hence, the probability that $i$ is separated from $k$ no earlier than $j$ is exactly $\frac{2}{3}$, as desired.
\end{proof}

\begin{proof}[Proof of Theorem~\ref{thm:derandom}] We derandomize the recursive random cutting algorithm  using the \emph{method of conditional expectations}. At every recursion, we go over the points in the current cluster one by one, and decide whether to put them in the ``left'' partition or ``right'' partition for the next recursion. Once we make a decision for a point, we fix that point and go to the next one. Now suppose for a cluster $C$ we have already fixed points $S\subseteq C$, and now we want to make a decision for $i\in C\setminus S$. The reward of assigning to left(right) partition is now defined as the expected value of recursive random cutting restricted to $C$, when the points in $S$ are fixed (i.e. it is already decided which points in $S$ are going to the left partition and which ones are going to the right partition), $i$ goes to the left(right) partition and $j\in C\setminus (\{i\}\cup S)$ are randomly assigned to either the left or right. Note that these two rewards (or the difference of the two rewards) can be calculated \emph{exactly} in polynomial time by considering all triples consisting of an edge and another vertex, and then calculating the probability that this triple contributes to the objective function (this is similar to the proof of Theorem~\ref{thm:recursive-random}, and we omit the details for brevity here). Because we know the randomized assignment of $i$ gives a $\frac{2}{3}$-approximation (Theorem~\ref{thm:recursive-random}), we conclude that assigning to the better of left or right partition for every vertex will remain to be at least a $\frac{2}{3}$-approximation. For running time, we have at most $n$ clusters to investigate. Moreover, a careful counting argument shows that the total number of operations required to calculate the differences of the rewards of assigning to left and right partitions for all vertices is at most $n(n+2m)$. Hence, the running time is bounded by $O(n^2(n+m))$.
\end{proof}
\begin{proof}[Proof sketch of Theorem~\ref{thm:constrained-random}.] Before starting to prove the theorem, we prove the following simple lemma.
\begin{lemma}
\label{lem:ind}
There is no edge between any two classes in the same layer $\layer_l$.
\end{lemma}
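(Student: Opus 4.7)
The plan is to prove this by contradiction using the definition of layers via longest-path lengths. Suppose, for contradiction, there exist two classes $\cons_i, \cons_j \in \layer_l$ with an edge between them in the dependency digraph. Since the feasibility assumption guarantees that the dependency digraph is acyclic (as noted in the remark following Definition~\ref{def:depdigraph}), exactly one direction of this edge can exist; without loss of generality assume the edge is $\cons_i \rightarrow \cons_j$.

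Now I would use the definition of $\layer_l$ directly: since $\cons_i \in \layer_l$, there is a directed path from $\cons$ to $\cons_i$ of length exactly $l$ (the longest such path). Appending the edge $\cons_i \rightarrow \cons_j$ produces a directed path from $\cons$ to $\cons_j$ of length $l+1$. But $\cons_j \in \layer_l$ means the longest path from $\cons$ to $\cons_j$ has length exactly $l$, a contradiction. Hence no such intra-layer edge can exist.

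I do not expect any obstacle here, since the lemma is essentially a structural consequence of defining layers by longest-path distances in a DAG: intra-layer edges are forbidden for exactly the same reason that longest-path layering in a DAG always yields a layered graph (edges go only from lower to strictly higher layers). The only minor subtlety to flag explicitly is the use of acyclicity to rule out the edge in the reverse direction, which is precisely where the feasibility hypothesis on the constraints enters.
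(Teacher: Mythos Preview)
Your proof is correct and essentially identical to the paper's one-line argument: assuming an intra-layer edge $\cons_i\to\cons_j$, concatenate a longest path $\cons\to\cons_i$ of length $l$ with this edge to get a path of length $l+1$ to $\cons_j\in\layer_l$, contradicting the definition of $\layer_l$. The only difference is that you invoke acyclicity to justify the ``without loss of generality'' step, but this is not actually needed---whichever direction the edge points, relabel so that it is $\cons_i\to\cons_j$ and the same contradiction follows.
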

\begin{proof}[Proof of Lemma~\ref{lem:ind}]
 If such an edge exists, then there is a path of length $l+1$ from $\cons$ to a class in $\layer_l$, a contradiction.
\end{proof}

Now, similar to the proof of Theorem~\ref{thm:recursive-random}, we consider every triple $\{x,y,z\}$, where $(x,y)\in E$ and $z$ is another point , but this time we only consider $z$'s that are not involved in any triplet constraint (there are at least $n-k$ such points). We claim with probability at least $\frac{2}{3\cdot\dmc{\{c_1,\ldots,c_k\}}}$ the supernode containing $z$ is still in the same cluster as supernodes containing $x$ and $y$ right before $x$ and $y$ gets separated. By summing over all such triples, we show that the algorithm gets a gain of at least $\frac{2(n-k)}{3\cdot\dmc{\{c_1,\ldots,c_k\}}}\sum_{(x,y)\in E}w_{xy}$, which proves the $\alpha$-approximation as the optimal clustering has a reward bounded by $n\sum_{(x,y)\in E}w_{xy}$. 
 
 To prove the claim, if $(x,y)$ is not the base of any triplet constraint then a similar argument as in the proof of Theorem~\ref{thm:recursive-random} shows the desired probability is exactly $\frac{2}{3}$ (with a slight adaptation, i.e. by looking at the coin sequences of supernodes containing $x$ and $y$, which are going to be disjoint in this case at all iterations, and the coin sequence of $z$). Now suppose $(x,y)$ is the base of any constraint $c$ and suppose $c$ belongs to a class $\cons$. Consider the layered dependency subgraph of $\cons$ as in Definition~\ref{def:depsub} and let the layers to be $\layer_0,\ldots,\layer_L$. In order for $z$ to be in the same cluster as $x$ and $y$ when they get separated, a chain of $L+1$ independent events needs to happen. These events are defined inductively; for the first event, consider the coin sequence of  $z$, coin sequence of (the supernode containing all the bases of) constraints in $\cup_{l=0}^L \layer_l$ and coin sequences of all the keys of constraints in $\layer_L$ (there are $\sum_{\cons'\in \layer_L}\lvert \cons'\rvert$ of them). Without loss of generality, suppose the coin sequence of (the supernode containing) $\cup_{l=0}^L \layer_l$  is all heads. Now the event happens only if at the time $z$ flips its first tales all keys of $\layer_L$ have already flipped at least one tales. Conditioned on this event happening, all the constraints in $\layer_L$ will be resolved and $z$ remains in the same cluster as $x$ and $y$. Now, remove $\layer_L$ from the dependency subgraph and repeat the same process to define the events $2,\ldots,L$ in a similar fashion. For the $l^{\textrm{th}}$ event to happen, we need to look at $1+\sum_{\cons'\in \layer_L}\lvert \cons'\rvert$ number of i.i.d. symmetric geometric random variable, and calculate the probability that first of them is no smaller than the rest. This event happens with a probability at least $\left(1+\sum_{\cons'\in \layer_L}\lvert \cons'\rvert\right)^{-1}$. Moreover the events are independent, as there is no edge between any two classes in $\layer_l$ for $l\in[L]$, and different classes have different keys. After these $L$ events, the final event that needs to happen is when all the constraints are unlocked, and $z$ needs to remain in the same cluster as $x$ and $y$ at the time they get separated. This event happens with probability $\frac{2}{3}$. Multiplying all of these probabilities due to independence implies the desired approximation factor.
 \end{proof}

\section{Experiments}
\label{sec:appendix-zoo}
The purpose of this section is to present the benefits of incorporating triplet constraints when performing Hierarchical Clustering. We will focus on real data using the Zoo dataset~(\cite{Zoo}) for a taxonomy application. We demonstrate that using our approach, the performance of simple recursive spectral clustering algorithms can be improved by approximately $9\%$ as measured by the Dasgupta's Hierarchical Clustering cost function (\hyperref[obj]{\ref{obj}}). More specifically:

\begin{itemize}

\item \emph{The Zoo dataset}: It contains 100 animals forming 7 different categories (e.g. mammals, amphibians etc.). The features of each animal are provided by a 16-dimensional vector containing information such as if the animal has hair or feathers etc.

\item \emph{Evaluation method}: Given the feature vectors, we can create a similarity matrix $M(\cdot,\cdot)$ indexed by the labels of the animals. We choose the widely used cosine similarity to create $M$.

\item \emph{Algorithms}: We use a simple implementation of spectral clustering based on the second eigenvector of the normalized Laplacian of $M$. By applying the spectral clustering algorithm once, we can create two clusters; by applying it recursively we can create a complete hierarchical decomposition, which is ultimately the output of the HC algorithm.

\item \emph{Baseline comparison}: Since triplet constraints are especially useful when there is noisy information (i.e. noisy features), we simulate this situation by hiding some of the features of our Zoo dataset. Specifically, when we want to find the target HC tree $T^*$, we use the full 16-dimensional feature vectors, but for the comparison between the unconstrained and the constrained HC algorithms we will use a noisy version of the feature vectors which consists of only the first 10 coordinates from every vector.

In more detail, the first step in our experiments is to evaluate the cost of the target clustering $T^*$. For this, we use the full feature vectors and perform repeated spectral clustering to get a hierarchical decomposition (without incorporating any constraints). We call this cost $\texttt{OPT}$.

The second step is to perform unconstrained HC but with noisy information, i.e. to run the spectral clustering algorithm repeatedly on the 10-dimensional feature vectors (again without taking into account any triplet constraints). This will output a hierarchical tree that has cost in terms of the Dasgupta's HC cost $\texttt{Unconstrained\_Noisy\_Cost}.$\footnote{The cost of the trees are always evaluated using the actual similarities obtained from the full feature vectors.}

The final step is to choose some structural constraints (that are valid in $T^*$)\footnote{Here we chose triplet constraints that will induce the same first cut as $T^*$ and required no constraints after that. This corresponds to a high-level separation of the animals, for example to those that are ``land'' animals versus those that are ``water'' animals.} and perform again HC with noisy information. We again use the 10-dimensional feature vectors but the spectral clustering algorithm is allowed only cuts that do not violate any of the given structural constraints. Repeating  until we get a decomposition gives us the final output which will have cost in terms of the Dasgupta's HC cost $\texttt{Constrained\_Noisy\_Cost}$. 
\end{itemize}

The first main result of our experimental evaluation is that the $\texttt{Constrained\_Noisy\_Cost}$ is surprisingly close to $\texttt{OPT}$, even though to get the $\texttt{Constrained\_Noisy\_Cost}$ the features used were noisy and the second main result is that incorporating the structural constraints yields $\approx 9\%$~improvement over the noisy unconstrained version of HC with cost $\texttt{Unconstrained\_Noisy\_Cost}$. Now that we have presented the experimental set-up, we can proceed by describing our results and final observations in greater depth.

\subsection{Experimental Results}

We ran our experiments for $20,50,80$ and $100$ animals from the Zoo dataset and for the evaluation of the $\%$ improvement in terms of the Dasgupta's HC cost (\hyperref[obj]{\ref{obj}}), we used the following formula:
\[
\frac{\texttt{Unconstrained\_Noisy\_Cost}-\texttt{Constrained\_Noisy\_Cost}}{\texttt{OPT}}
 \]
 
 The improvements obtained due to the constrained version are presented in~\hyperref[tab:exp]{Table~\ref{tab:exp}}.

\begin{center}

\begin{table}

 \begin{tabular}{|c|||c|c|c|||c|} 

  \hline
 $\#$animals & $\texttt{OPT}$ & $\texttt{Unconstrained\_Noisy\_Cost}$ & $\texttt{Constrained\_Noisy\_Cost}$ & $\%$ Improvement  \\ [0.5ex] 
 \hline\hline
 20 & 1137 & 1286 & 1142 & 12.63\\ 
 \hline
 50 & 23088 & 25216 & 23443 &7.68\\
 \hline
 80 & 89256 & 99211 & 90419 &9.85\\
 \hline
 100 & 171290 & 190205 & 173499 &9.75\\
 \hline
\end{tabular}

\caption{Results obtained for the Zoo dataset. The improvement corresponds to the lower cost of the output HC tree after incorporating structural constraints, even in the presence of noisy features. Observe that in all cases the performance of $\texttt{Constrained\_Noisy\_Cost}$ is extremely close to the $\texttt{OPT}$ cost.}\label{tab:exp}
\end{table}
\end{center}

Some observations regarding the structural constraints are the following:

\begin{itemize}

\item When we add triplet constraints to the input as advice for the algorithm, it is crucial for the triplet constraints to actually be useful. ``Easy'' constraints that are readily implied by the similarity scores will have no extra use and will not lead to better solutions.
\item We also observed that having ``nested'' constraints can be really useful. Nested constraints can guide our algorithm to perform good cuts as they refer to a larger portion of the optimum tree $T^*$ (i.e. contiguous subtrees) rather than just different unrelated subtrees of it. The usefulness of the given constraints is correlated with the depth of the nested constraints and their accordance with the optimum tree $T^*$ based on Dasgupta's objective.
\item Furthermore, since most of the objective cost comes from the large initial clusters, we focused on the partitions that created large clusters and imposed triplet constraints that ensured good cuts in the beginning. Actually in some cases, just the first 3 or 4 cuts are enough to guarantee that we get $\approx 12\%$ improvement.

\item Finally, we conclude that just the number of the given triplet constraints may not constitute a good metric for their usefulness. For example, a large number of constraints referring to wildly different parts of $T^*$, may end up being much less useful than a smaller number of constraints guiding towards a good first cut.

\end{itemize}

\end{document}